\newcommand{\paperTitle}{Analytic relationship of relative synchronizability to network structure and motifs}
\newcommand{\theKeywords}{synchronization, complex networks, motifs, feedback loops, feedforward loops}

\newcommand{\abstractText}{Synchronization phenomena on networks have attracted much attention in studies of neural, social, economic, and biological systems, yet we still lack a systematic understanding of how relative synchronizability relates to underlying network structure.
Indeed, this question is of central importance to the key theme of how dynamics on networks relate to their structure more generally.
We present an analytic technique to directly measure the relative synchronizability of noise-driven time-series processes on networks, in terms of the directed network structure.
We consider both discrete-time auto-regressive processes and continuous-time Ornstein-Uhlenbeck dynamics on networks.
Our technique builds on computation of the network covariance matrix in the space orthogonal to the synchronized state, enabling it to be more general than previous work in not requiring either symmetric (undirected) or diagonalizable connectivity matrices, and allowing arbitrary self-link weights.
More importantly, our approach quantifies the relative synchronisation specifically in terms of the contribution of process motif (walk) structures. We demonstrate that in general the relative abundance of process motifs with convergent directed walks (including feedback and feedforward loops) hinders synchronizability.
We also reveal subtle differences between the motifs involved for discrete or continuous-time dynamics.
Our insights analytically explain several known general results regarding synchronizability of networks, including that small-world and regular networks are less synchronizable than random networks.}

\documentclass[
 aps, %
 amsmath,amssymb,
 pre,superscriptaddress,showkeys,showpacs, %
 reprint,%
 numerical,%
 nofootinbib,
 nobalancelastpage, %
]{revtex4-2}

\newcommand{\fullSubSecRef}[2]{\secRef{#2}} %
\newcommand{\app}[1]{Supplementary Information \ref{app:#1}}
\newcommand{\fig}[1]{Fig.~\ref{fig:#1}}
\newcommand{\subfigs}[2]{Fig.~\ref{fig:#1}-\subref{fig:#2}}
\newcommand{\subrefs}[2]{\subref{fig:#1}-\subref{fig:#2}}
\newcommand{\eq}[1]{Eq.~(\ref{eq:#1})}
\newcommand{\eqs}[2]{Eq.~(\ref{eq:#1}-\ref{eq:#2})}

\newcommand{\secRef}[1]{Section \ref{sec:#1}}
\newcommand{\fn}[1]{footnote \ref{fn:#1}}

\newcommand{\theoremRef}[1]{Theorem \ref{th:#1}}
\newcommand{\lemmaRef}[1]{Lemma \ref{lem:#1}}

\newcommand{\generalMotifClosedHeight}{0.068\textwidth}
\newcommand{\generalMotifOpenHeight}{0.083\textwidth}
\newcommand{\motifFigWidth}{0.10\textwidth}
\newcommand{\motifFigHeight}{0.08\textwidth}

\newcommand{\walkmotif}[3]{\mathbf{w}_{#1 \rightarrow #2,#3}}
\newcommand{\walkprod}[6]{\mathbf{w}_{#1 \rightarrow #2,#3}^{#4 \rightarrow #5,#6}}

\usepackage{amsthm}
\usepackage{graphicx}%
\usepackage{subfigure} %
\usepackage{supertabular,array}
\usepackage{hyphenat} %
\usepackage{multirow}
\usepackage[sort&compress]{natbib}

\usepackage{xcolor}

\newtheorem{theorem}{Theorem}
\newtheorem{lemma}{Lemma}

\ifx\pdfoutput\undefined
\else
    \usepackage[bookmarks=false,colorlinks=false,citecolor=black,linkcolor=black,breaklinks=true,pdftex]{hyperref}
  	\hypersetup{
      	pdfauthor = {Joseph T. Lizier, Frank Bauer, Fatihcan M. Atay and J\"urgen Jost},
        pdftitle = {\paperTitle},
  	    pdfsubject = {Version \today},
      	pdfkeywords = {\theKeywords},
        pdfcreator = {LaTeX with hyperref package},
  	    pdfproducer = {pdftex},
  	    pdfstartview = {FitH}}
\fi

\begin{document}

\title{\paperTitle}

\author{Joseph T. Lizier}
\email[]{joseph.lizier@sydney.edu.au}
\affiliation{School of Computer Science and Centre for Complex Systems, Faculty of Engineering, The University of Sydney, NSW 2006, Australia}
\affiliation{Max Planck Institute for Mathematics in the Sciences,
Inselstra{\ss}e 22, 04103 Leipzig, Germany}

\author{Frank Bauer}
\affiliation{Max Planck Institute for Mathematics in the Sciences,
Inselstra{\ss}e 22, 04103 Leipzig, Germany}
\affiliation{Department of Mathematics, Harvard University, 1 Oxford Street, Cambridge MA 02138, USA}

\author{Fatihcan M. Atay}
\affiliation{Department of Mathematics, Bilkent University, 06800 Ankara, Turkey}
\affiliation{Max Planck Institute for Mathematics in the Sciences,
Inselstra{\ss}e 22, 04103 Leipzig, Germany}

\author{J\"urgen Jost}
\affiliation{Max Planck Institute for Mathematics in the Sciences,
Inselstra{\ss}e 22, 04103 Leipzig, Germany}
\affiliation{Santa Fe Institute, 1399 Hyde Park Road, Santa Fe, NM 87501, USA}

\date{May 17, 2023}

\begin{abstract}
\abstractText
\end{abstract}

\pacs{89.75.Hc, 05.45.Xt, 02.10.Ox, 87.19.lm, 89.75.Fb}
\keywords{\theKeywords}

\maketitle

\section{Introduction}
\label{sec:intro}

The manner in which systems of coupled oscillators achieve synchronization is one of the most widely-occuring, compelling, and oldest examples analysed in complex systems science \cite{aren08,Pikovsky01,stro03}.
Synchronization has been observed and studied in many different and seemingly unrelated fields, including in swarms of flashing fireflies \cite{sarfati2021}, neural synchrony \cite{Grinstein2005,varela2001,womelsdorf2014}, population dynamics in biology \cite{winfree1967}, computer systems \cite{korniss2003}, and many others. %
Whilst much early study focussed on the Kuramoto model \cite{kuramoto75,kuramoto84} with full coupling between the oscillators (or via mean field interactions), the study of synchronization is naturally generalized to complex networks of interaction between the oscillators (e.g. \cite{aren08} and references therein).
In this setting, each oscillator adjusts its activity in a decentralised fashion, as a function of the activity of the subset of oscillators that it is connected to via the complex network.
Whilst the capacity for synchronization of the oscillators, and the relative quality of such synchronization, varies with the strength of connectivity, it will clearly also vary with the underlying network structure \cite{aren08,porter2016}.
Establishing a full analytic relationship between network structure and synchronizability is the focus of this article.

Indeed, this relates to the more general research question of how dynamics on such complex networks relate to their structure \cite{porter2016}. Whilst the importance of such investigations have long been recognised, characterising the structure-dynamics relationship remains ``much less well understood'' than characterising structure alone \cite{mitchell2006} since the problem is simply more diverse \cite{barabasi2009}.
How synchronization relates to network structure between oscillators is seen of central importance to this wider structure-dynamics question \cite{porter2016,aren08}, because of the significant interdisciplinary and historical interest in the dynamic process of synchronization.

Formally, given a directed network structure and coupling strengths, we aim to determine whether the network is synchronizable and then to rank the relative quality of that synchronization (in terms of robustness to perturbations away from a synchronized state).
We will take the standard approach of examining linear systems, which can be considered as a linearisation around an attractor or as an approximation to the weakly-coupled near-linear regime, where general analytic insights are possible (see \cite{BAJ, jost01a, aren08, Pikovsky01, atay06a,hunt10a,hunt12a,korn07} and the references therein).
As we will describe in \secRef{measuringSync}, in this regime one can easily write down synchronization conditions in terms of the eigenvalues of the network coupling matrix or weighted adjacency matrix $C$ (or equivalently those of the system Laplacian $L$) \cite{atay06a,hunt12a,korn07}.
But since for linear systems broad ranges of coupling strengths can guarantee synchronization for any network structure \cite{atay06a}, it is useful to have insights into the \emph{relative} synchronizability of a network beyond simply answering whether it will synchronise or not \cite{korn07}.
\textit{Relative synchronizability} can be quantified as the average \emph{steady-state distance from synchronization} $\left\langle \sigma^2 \right\rangle$ under noise-driven dynamics (see \secRef{measuringSync}), having been computed analytically using all eigenvectors and all eigenvalues for symmetric $C$ \cite{korn07,hunt10a,hunt12a}. %
Heuristics based on leading eigenvalues or spread of eigenvalues are also often used \cite{korn07,nish10a,jost01a,alm07a}.

Yet despite the analytic success under these conditions, it could not be said that we have a systematic analytic understanding of how relative synchronizability relates to network \emph{structure}. %
This is because, apart from the extremal eigenvalues which are related to global properties \cite{chung97, bauerjost13}, there is no known general method to relate eigenvalues of the coupling matrix $C$ or the Laplacian $L$ to network structure.
Specifics are known only for very special cases, such as fully-connected and bipartite graphs,
yet for non-symmetric networks the corresponding coupling operator is not self-adjoint anymore and only in very special cases connections between eigenvalues and structure of the network are known \cite{bauer12}.
There are of course many studies empirically evaluating the dependence of synchronizability on network structure either based on the above analytic forms from the eigenvalue spectra \cite{korn07,hunt12a} or otherwise heuristic approximations \cite{korn07,nish10a,jost01a,alm07a,barahona2002,aguirre2014,nishikawa2003,nishikawa2017} or numerical simulations \cite{grabow2010,kim13a,kim2015,shanahan2008,nordenfelt2014,buscarino2013}, or else evaluation of the synchronizability of standalone motif structures outside of a network setting \cite{morenovega2004,lodato2007,li2010}.
Some results across these are relatively consistent, including that more random network structures tend to synchronize more effectively than regular ring or small-world structures \cite{nishikawa2003,atay04a,korniss2003,barahona2002,grabow2010,kim13a,kim2015,aren08}, and that degree homogeneity is helpful for synchronization \cite{korn07,nish10a,aren08}.
However Arenas et al. \cite{aren08} state that due to difficulties in isolating single network characteristics while controlling others, results can be conflicting, meaning that the situation remains ``quite confusing'' and can be misinterpreted.
Crucially, even where we can analytically compute a measure of synchronizability of a network, we cannot yet fully articulate the contribution of any local sub-network motif structures \cite{milo02a} such as feedback loops to promoting or detracting from whole of network synchronizability.

Here, we provide a full and general analytical solution to this problem, including quantifying how the different network motifs shape the synchronizability of the network as a whole.
Specifically, in \secRef{Covariance} and \secRef{motifRelationships}, we introduce a new method to measure the relative synchronizability via the steady-state distance from synchronization $\left\langle \sigma^2 \right\rangle$ of a network structure (with final results in \eq{synchronizabilityMotifsContinuous} and \eq{synchronizabilityMotifsDiscrete}).
In comparison to previous analytic work \cite{korn07,hunt12a}, our method determines relative synchronizability without requiring symmetry, nonnegativity, or diagonalizability of the weighted coupling matrix $C$. This is important theoretically in view of generality of results (e.g. the importance of diagnolizability \cite{nishikawa2006}), but particularly in view of biological applications, for instance, neuronal network models which are usually directed and use positive and negative weights corresponding to excitatory and inhibitory synapses, with these playing different roles in synchronization \cite{nish10a}. 
More importantly, our approach naturally provides the first analytic description relating relative synchronizability to local network structures, specifically ``process motifs'' as structured sets of walks on sub-network structures \cite{schwarze2021}.
We will show in \secRef{discussion} that networks become less synchronizable as the (weighted) sum of walks from each node that \textit{converge} to a common end-point rather than \textit{diverge} becomes larger.
We show subtle differences between discrete and continuous-time dynamics, where with discrete-time dynamics only convergent walks of the same length detract from synchronizability, whereas in continuous-time dynamics more general feedforward and feedback process motif structures also detract.
These results have important implications for our understanding of natural networks and their synchronizability, in particular regarding synchronisation phenomena in neural and cell-regulatory networks which exhibit a prevalence of such loop process motif structures \cite{milo02a}, in addition to directed and signed edge weights.
We expect that the new analytic forms we provide will be utilized to analytically explore a wide range of new relationships between network structure and relative synchronizability.

\section{Methods: Relative synchronizability}
\label{sec:measuringSync}

\subsection{Models of dynamics on a network structure}

Given the structure and edge weights $C$ of a network of $N$ nodes, we consider the dynamics on such networks in two very simple and widely-used models of linearly-coupled noise-driven systems.
These are: i. the continuous-time (differential equation) multivariate Ornstein-Uhlenbeck process \cite{uhl30,oks03,barn09b,schwarze2021} (also known as the Edwards-Wilkinson \cite{edw82} process on a network \cite{hunt12a,korn07}):
\begin{align}
	d\vec{x}(t) = - \vec{x}(t) (I - C) \theta dt + \zeta d\vec{w}(t)
	\label{eq:ornsteinUhlenbeck},
\end{align}
and ii. the discrete-time (difference equation) multivariate or vector autoregressive (VAR) process \cite{barn09b}:
\begin{align}
	\vec{x}(t + 1) = \vec{x}(t) C + \zeta \vec{r}(t)
	\label{eq:discreteARProcess},
\end{align}
on that network.
In both cases, $C=[C_{ji}]$ is the $N \times N$ \emph{connectivity matrix} (or \textit{weighted} adjacency matrix), where $C_{ji}$ is the (real, possibly negative) weight of the directed connection from node $j$ to $i$, and $I$ is the $N \times N$ identity matrix.
The current node values or activity levels $\vec{x}(t) = \{ x_1(t), x_2(t), ... x_N(t)\}$ is a row vector here.

These processes, with reversion rate $\theta > 0$ in continuous-time, are driven by uncorrelated noise terms with strength $\zeta^2$ (for a multivariate Wiener process $\vec{w}(t)$, and mean-zero unit-variance Gaussian noise $\vec{r}(t)$) with covariance matrix equal to $\zeta^2 I$ in both cases.
These noisy perturbations put the network into a perennial transient with respect to the synchronized state or \textit{zero mode} $\vec{\psi_0} = [ 1, 1, \ldots , 1 ]$ as a potential attractor.
Indeed, most studies (e.g. \cite{korn07,hunt10a,hunt12a,atay04a}) consider $C$ with $\vec{\psi_0}$ as an eigenvector with eigenvalue $\lambda_0=1$ (so the sum of all incoming edges to a target $\sum_{j}{C_{ji}} = 1$, i.e. with diffusive coupling), giving the possibility of a dynamic synchronization around $\vec{\psi_0}$ rather than trivial synchronization around the null state.
We will consider $C$ both with and without $\vec{\psi_0}$ as an eigenvector.
We also note that other studies often focus on the \textit{Laplacian} instead of the coupling matrix $C$, specifically here the normalized Laplacian \cite{StadlerBook} $L= I-C$ (when $\vec{\psi_0}$ is an eigenvector with eigenvalue $\lambda_0=1$).

\subsection{Synchronization conditions}
\label{sec:syncConditions}

It is well-known that the necessary \emph{synchronization conditions} 
(for the case without constant driving noise; demonstrated for constant driving noise in \cite{hunt12a})
can be stated in terms of the eigenvalues $\lambda_v$ of $C$ (or translated to those of $L$):
\begin{enumerate}
\item For continuous-time dynamics in \eq{ornsteinUhlenbeck}, we require $\mathrm{Re}(\lambda_v) < 1$ for all eigenvalues $\lambda_v$ of $C$ (except for $\lambda_0$ when $\vec{\psi_0}$ is an eigenvector).
\item For discrete-time dynamics in \eq{discreteARProcess}, we require $| \lambda_v | < 1$ for all $\lambda_v$ (again except for $\lambda_0$ when $\vec{\psi_0}$ is an eigenvector).
\end{enumerate}

Since these conditions can be met for any network structure by multiplicatively lowering the magnitudes of the $C_{ji}$ where required \cite{atay06a}\footnote{Though in lowering the eigenvalue of $\vec{\psi_0}$ this may result in trivial synchronization around the null state.}, it is useful to have insights into \emph{relative synchronizability} beyond simply answering whether the network will synchronise or not \cite{korn07}.
For various types of dynamics (not neccessarily linear systems, with or without time delays), one can make inferences regarding the relative synchronizability of various networks from their eigenvalue spectra \cite{atay05a} (ignoring $\lambda_0$ corresponding to $\vec{\psi_0}$). %
While the precise definition of synchronizability differs with dynamics, the general picture is that the relative synchronizability increases with the distance of the eigenvalues (in particular the largest) from the stability boundaries above, and also as the spread of eigenvalues becomes smaller \cite{atay05a,korn07,nish10a,jost01a,alm07a}.

\subsection{Steady-state distance from synchronisation $\left\langle \sigma^2 \right\rangle$}

Towards a more precise alternative, we can compute:
\begin{align}
	\sigma^2(t) = \frac{1}{N} \sum_{i=1}^N{ (x_i(t) - \bar{x}(t))^2 }
	\label{eq:degreeOfSync},
\end{align}
to measure the distance from synchronization at time $t$ in the noise-driven cases of \eqs{ornsteinUhlenbeck}{discreteARProcess}, with $\bar{x}(t)= \frac{1}{N}\sum_{i=1}^{N}{x_i(t)}$. %
Then, to evaluate relative synchronizability directly we compute $\left\langle \sigma^2 \right\rangle$ \cite{atay04a,korniss2003,korn07}, the expectation value which ``represents an average over the statistical ensemble'' \cite{barn09b} over the noise \cite{hunt10a}, dropping the time argument to indicate steady-state \cite{korn07}.
Under the standard synchronization conditions (above), this average \textit{steady-state} distance from the synchronized state ``approaches a finite value in the $t \rightarrow \infty$ limit'' \cite{hunt10a}, i.e. $\left\langle \sigma^2 \right\rangle < \infty$ \cite{hunt12a,hunt10a}, in contrast to $\left\langle \sigma^2 \right\rangle$ diverging when the synchronization conditions are not met.\footnote{In the absence of driving noise in \eqs{ornsteinUhlenbeck}{discreteARProcess}, the synchronization condition implies that
$\sigma^2(t) \rightarrow 0$ as $t \rightarrow \infty$ \cite{atay04a}.
This means that the network would completely synchronize to a common state without driving noise.
In other words, the synchronization condition is that the synchronized states are \textit{stable}; i.e. the system is stable to disturbances orthogonal to $\vec{\psi_0}$ \cite{korn07}.
In the presence of driving noise then, $\left\langle \sigma^2 \right\rangle < \infty$ implies that the network will maintain a perennial approach to synchronization \emph{without diverging}.
}
So, since $\left\langle \sigma^2 \right\rangle$ takes a \emph{range} of finite values when the network satisfies the synchronization condition \cite{korn07}, it provides a nuanced and meaningful indication of relative synchronizability, with smaller values indicating stronger relative synchronizability under driving noise. %
For practical use, one could compute $\left\langle \sigma^2(t) \right\rangle$ empirically via \eq{degreeOfSync} after some fixed number of time-steps (e.g. \cite{atay04a}), %
however we will focus on analytically evaluating $\left\langle \sigma^2 \right\rangle$.

We observe that:
\begin{align}
	\left\langle \sigma^2 \right\rangle & = \lim_{t \rightarrow \infty} \frac{1}{N} \sum_{i=1}^{N}{ \left\langle (x_i(t) - \bar{x}(t))^2 \right\rangle }, \label{eq:syncInTermsOfComponents} \\
		& = \lim_{t \rightarrow \infty} \frac{1}{N} \mathrm{trace}\left( \left\langle (U^T \vec{x}(t)^T) (\vec{x}(t) U) \right\rangle \right), \nonumber \\
		& = \frac{1}{N} \mathrm{trace}\left( U^T \Omega U \right), \nonumber \\
		& = \frac{1}{N} \mathrm{trace}\left( \Omega_U \right)
	\label{eq:covarianceInProjected},
\end{align}
where $U$ is the \emph{centering matrix} or \textit{unaveraging operator}\footnote{See further definition and useful expressions in \app{traceInOrthogonalSpace}.} $\vec{x}U = \vec{x} - \bar{x}\vec{\psi_0}$ \cite{mard95}; whilst $\Omega = \lim_{t \rightarrow \infty} \left\langle \vec{x}(t)^T \vec{x}(t) \right\rangle$ (with elements $\Omega_{ij} = \lim_{t \rightarrow \infty} \left\langle x_i(t) x_j(t) \right\rangle$) is the steady-state symmetric covariance matrix of $\vec{x}(t)$; 
and $\Omega_U \triangleq \lim_{t \rightarrow \infty} \left\langle U^T \vec{x}(t)^T \vec{x}(t) U \right\rangle$, and $ \Omega_U = U^T \Omega U$ when $\Omega$ is well-defined.
Note that $U$ makes projections onto the space orthogonal to the zero-mode $\vec{\psi_0}$, so we can interpret \eq{covarianceInProjected} as measuring $\left\langle \sigma^2 \right\rangle$ proportional to the trace of $\Omega$ in the space orthogonal to $\vec{\psi_0}$.

To analytically evaluate $\left\langle \sigma^2 \right\rangle$ then, we must determine $\Omega_U$ or $\mathrm{trace}\left( \Omega_U \right)$ directly from $C$ (or its eigenvalues).

\subsection{Existing methods to determine $\left\langle \sigma^2 \right\rangle$}
\label{sec:existingMethod}

Galan \cite{gal08a} presents an eigenvalue decomposition technique for computing $\Omega$ for a generalized case of \eq{ornsteinUhlenbeck} directly from the eigenvectors and eigenvalues of $C$.\footnote{These calculations were made for a more general case of our continuous-time system \eq{ornsteinUhlenbeck} with correlated noise. However the solution presented is actually for a discretization corresponding to \eq{discreteARProcess}, which no longer corresponds to the continuous-time equation (as shown by Barnett et al. \cite{barn09b}).}
However, this approach as presented is not applicable to $C$ in the common case (for synchronization) when it has $\lambda_0=1$ for eigenvector $\vec{\psi_0}$, and it also requires $C$ to be diagonalizable.

Korniss \cite{korn07} computes $\left\langle \sigma^2 \right\rangle$ for a continuous-time system (akin to \eq{ornsteinUhlenbeck}) using a Green's functions approach to the covariance components.
(When mapped to our \eq{ornsteinUhlenbeck}) this approach requires $C$ to be symmetric, requires positive edge weights $C_{ij}>0$, and assumes the common case that $\vec{\psi_0}$ is an eigenvector of $C$ with $\lambda_0 = 1$.
Using the well-known eigenvalue expansion for the Green's function (or two point correlation function) $\left\langle \sigma^2 \right\rangle$ is computed for such $C$ with $\theta=\zeta=1$ as:\footnote{We have mapped the solution to eigenvalues of $C$ rather than $L$, and for noise $\vec{w}(t)$ with the same covariance we consider here (Korniss \cite{korn07} considered noise with twice as much covariance).}
\begin{align}
	\left\langle \sigma^2 \right\rangle = \frac{1}{2N} \sum_{v=1}^{N-1}{ \frac{1}{1 - \lambda_{v}} }
	\label{eq:widthAsInverseEigs},
\end{align}
where $\lambda_{v}$ are the eigenvalues of $C$ except for $\lambda_0$.
Insights were then derived into specific network structures \cite{korn07}, i.e.: for fixed total edge costs, fully-connected network of identical edge-weights minimize $\left\langle \sigma^2 \right\rangle$;
while for fixed edge weights and a fixed average degree, a perfectly homogeneous random graph where each node connects to exactly $k$ others is optimal.
Hunt et al. \cite{hunt12a} extended these results to compute $\left\langle \sigma^2 \right\rangle$ for
a coupling delay.

\section{Results: Analytic determination of steady-state distance from synchronisation $\left\langle \sigma^2 \right\rangle$}
\label{sec:Covariance}

In this section, we first describe a new method to compute the projected covariance matrix $\Omega_U$ for \eqs{ornsteinUhlenbeck}{discreteARProcess} directly from $C$ in \fullSubSecRef{Covariance}{powerSeriesMethod}.
Since this method, unlike previous work, does not require either symmetric or diagonalizable $C$, we are thus able to compute the steady-state distance from synchronisation $\left\langle \sigma^2 \right\rangle$ without such restrictions in \fullSubSecRef{Covariance}{solution} for the first time.
Numerical validation of the method is provided in \fullSubSecRef{Covariance}{numericalValidation}.
Perhaps more importantly than the result for $\left\langle \sigma^2 \right\rangle$ itself though, our method paves the way to directly reveal the contribution of motif structures on $\left\langle \sigma^2 \right\rangle$ in the following \secRef{motifRelationships}.

\subsection{Power series method for $\Omega_U$}
\label{sec:powerSeriesMethod}

First, we contribute a new approach to computing $\Omega_U$, which does not require symmetry or diagonalizability of $C$.
This approach is based on Barnett et al.'s method \cite{barn09b} to analytically obtain $\Omega$ as a power series of $C$.\footnote{Ref. \cite{barn09b} corrects an original attempt by Tononi et al.\cite{ton94}. Similar methods are also presented in \cite{pernice2011,hu2014}.}

For the \textbf{continuous-time process} with $\theta=\zeta=1$ in \eq{ornsteinUhlenbeck} Barnett et al. \cite{barn09b} show that the covariance matrix $\Omega$ is obtained via a power series \cite{barn09b}:
\begin{align}
	2 \Omega & = I + \frac{1}{2} (C^T + C) + \frac{1}{4} \left[ (C^2)^T + 2C^TC + C^2 \right] + \ldots
	, \nonumber \\
	& = \sum_{m=0}^{\infty}{ 2^{-m} \sum_{u=0}^{m}{\binom{m}{u} (C^u)^T C^{m-u}} }
	\label{eq:covarianceGeneralContinuous}.
\end{align}
The validity of this solution relies on both stationarity of \eq{ornsteinUhlenbeck} (given by $\mathrm{Re}(\lambda) < 1$, for all real parts  $\mathrm{Re}(\lambda)$ of eigenvalues $\lambda$ of $C$) and convergence of \eq{covarianceGeneralContinuous} (shown only for the stronger condition of $\rho (C) < 1$ for the spectral radius $\rho (C)$ of $C$ in \cite{barn09b}).

As such, this solution does not directly apply for our general case with $\lambda_0 = 1$ for the zero-mode $\vec{\psi_0}$, which is important for non-trivial synchronized solutions.\footnote{\label{fn:UbracketsOmega}One cannot trivially compute $\Omega_U = U^T \Omega U$ from \eq{covarianceGeneralContinuous} by relying on non-existent convergence of $\Omega$.}
To address this, we have extended the derivations of Barnett et al. \cite{barn09b} and Schwarze and Porter \cite{schwarze2021} in this case to examine $\Omega_U$ (in the space orthogonal to the zero-mode), demonstrating in \app{convergenceUTOmegaUContinuous} (c.f. \eq{bracketedCovarianceGeneralContinuousFinal}):
\begin{align}
	\Omega_U = \frac{\zeta^2}{2\theta} \sum_{m=0}^{\infty}{ 2^{-m} \sum_{u=0}^{m}{\binom{m}{u} U (C^u)^T C^{m-u} U} }
	\label{eq:bracketedCovarianceGeneralContinuousMain},
\end{align}
and showing the validity of this solution for $\rho (C U) < 1$, i.e. for $|\lambda_C|<1$ for all eigenvalues $\lambda_C$ of $C$ except that corresponding to $\vec{\psi_0}$.

Clearly this matches $\Omega_U=U^T \Omega U$ with $\theta=\zeta=1$ obtained directly from $\Omega$ in \eq{covarianceGeneralContinuous} for the more restricted regime where \eq{covarianceGeneralContinuous} is valid.
Intuitively we can view this as a projection of the covariance matrix onto the space orthogonal to the synchronized state $\vec{\psi_0}$.
We emphasise though (as per \fn{UbracketsOmega}) that one cannot simply write this down when \eq{covarianceGeneralContinuous} is not valid; the foundation provided by the proof in \app{convergenceUTOmegaUContinuous} is crucial.

For the \textbf{discrete-time process} with $\zeta=1$ in \eq{discreteARProcess}, Barnett et al. show that the general solution for $\Omega$ is obtained from:
\begin{align}
	\Omega = I + C^T C + (C^2)^T C^2 + \ldots = \sum_{u=0}^{\infty}{(C^u)^T C^u}
	\label{eq:covarianceGeneralDiscrete}.
\end{align}
Again, the validity of this solution requires stationarity of \eq{discreteARProcess} and convergence of \eq{covarianceGeneralDiscrete} (given by $|\lambda_C| < 1$ for both).

As such, this solution does not directly apply for our general case with $\lambda_0 = 1$ for the zero-mode $\vec{\psi_0}$.
To address this, we again extend the derivation in \cite{barn09b} to examine $\Omega_U$, demonstrating in \app{convergenceUTOmegaUDiscrete} (c.f. \eq{bracketedCovarianceGeneralDiscreteFinal}):
\begin{align}
	\Omega_U = \zeta^2 \sum_{u=0}^{\infty}{U (C^u)^T C^{u} U}
	\label{eq:bracketedCovarianceGeneralDiscreteMain},
\end{align}
and showing there the validity of this solution for $\rho (C U) < 1$.
Again, this clearly this matches $\Omega_U=U^T \Omega U$ with $\zeta=1$ obtained directly from $\Omega$ in \eq{covarianceGeneralDiscrete} for the more restricted regime where \eq{covarianceGeneralDiscrete} is valid.

\subsection{Solution for steady-state distance from synchronisation $\left\langle \sigma^2 \right\rangle$}
\label{sec:solution}

With the above solutions for $\Omega_U$, we can now provide the solutions for $\left\langle \sigma^2 \right\rangle$.

If we have $|\lambda_C| < 1$, for all eigenvalues $\lambda_C$ of $C$, then both \eq{covarianceGeneralContinuous} and \eq{covarianceGeneralDiscrete} are valid, and we can immediately evaluate \eq{covarianceInProjected} with $\Omega$ to get $\left\langle \sigma^2 \right\rangle$.
Alternatively, in the more interesting case with the zero-mode $\vec{\psi_0}$ with $\lambda_0 = 1$ (where non-trivial synchronized solutions are possible), then with $|\lambda_C| < 1$ for all eigenvalues $\lambda_C$ of $C$ except that corresponding to $\vec{\psi_0}$, \eq{bracketedCovarianceGeneralContinuousMain} and \eq{bracketedCovarianceGeneralDiscreteMain} are valid, and we can evaluate \eq{covarianceInProjected} from these.
Both approaches lead to identical forms for $\Omega_U$ and $\left\langle \sigma^2 \right\rangle$.

As such, for the \textbf{continuous-time process} in \eq{ornsteinUhlenbeck}, we expand \eq{covarianceInProjected} using \eq{bracketedCovarianceGeneralContinuousMain} to obtain the solution for steady-state distance from sync:
\begin{align}
	\left\langle \sigma^2 \right\rangle = \frac{\zeta^2}{2\theta} \sum_{m=0}^{\infty} & \frac{2^{-m}}{N} \sum_{u=0}^{m} \binom{m}{u} \mathrm{trace}\left( U (C^u)^T C^{m-u} U \right).
	\label{eq:synchronizabilityContinuous}
\end{align}

Similarly, for the \textbf{discrete-time process} in \eq{discreteARProcess}, we expand \eq{covarianceInProjected} using \eq{bracketedCovarianceGeneralDiscreteMain} to obtain the corresponding solution for steady-state distance from sync:
\begin{align}
	\left\langle \sigma^2 \right\rangle & = \frac{\zeta^2}{N} \sum_{u=0}^{\infty}{ \mathrm{trace}\left( U (C^u)^T C^{u} U \right)}.
	\label{eq:synchronizabilityDiscrete}
\end{align}

These solutions represent a substantial advance over existing methods, in neither requiring symmetric nor diagonalizable $C$, and we are thus able to compute the steady-state distance from synchronisation $\left\langle \sigma^2 \right\rangle$ without such restrictions for the first time.
Strictly, we do require $|\lambda_C| < 1$ for all eigenvalues $\lambda_C$ of $C$ except that corresponding to $\vec{\psi_0}$, though this is equivalent to the synchronizability condition in discrete time (\fullSubSecRef{measuringSync}{syncConditions}).

\subsection{Numerical validation}
\label{sec:numericalValidation}

With the analytic solutions in place in \eqs{synchronizabilityContinuous}{synchronizabilityDiscrete}, we seek to provide numerical validation for them.
Matlab code for these experiments is distributed in the \textit{linsync} toolbox \cite{linsync}, including adaptations of \textit{n\_comp} \cite{ncomp}.

In order to check the validity of the solution over a wide variety of networks, we generate sample network structures throughout a small-world transition on an $N$-node Watts-Strogatz ring network \cite{watts98}.
Initially each node has $d$ \textit{directed} incoming edges from its closest neighbors on the ring, then sources of each edge are randomised with probability $p$.
The $d$ incoming edges for each node $i$ have equal edge weights $C_{ji} = c / d$, with total incoming (non-self) connection weights $c = \sum_{j \neq i} C_{ji}$, and each node has self-link weights $C_{ii} = 1 - c$ (which are not randomised).
This setup with $\sum_{j}{C_{ji}} = 1$ ensures that each connectivity matrix $C$ has zero-mode eigenvector $\psi_0$ with $\lambda_0 = 1$, facilitating dynamic synchronization around $\psi_0$.
In the few cases where the edge randomization process results in networks which do not meet the the requirement of $|\lambda_C| < 1$ for all other eigenvalues $\lambda_C$ (and by implication the synchronization conditions in \fullSubSecRef{measuringSync}{syncConditions}), e.g. by disconnecting the network, these network realisations are discarded.

Numerical results are generated for the continuous-time dynamics in \eq{ornsteinUhlenbeck} with $\theta = \zeta = 1$.
We compute the steady-state distance from synchronisation $\left\langle \sigma^2 \right\rangle$ analytically via \eq{synchronizabilityContinuous}.
We also compute this quantity empirically from time series simulated from \eq{ornsteinUhlenbeck} using the exact method (following Appendix A of Barnett et al. \cite{barn09b}) with samples taken every $dt$=1.0 time units, and thus compute the average $\left\langle \sigma^2 \right\rangle_E$ over $L$ time-series samples via \eq{degreeOfSync}.

\fig{convergence} plots the mean of the relative absolute error $\left| \left\langle \sigma^2 \right\rangle - \left\langle \sigma^2 \right\rangle_E \right| / \left\langle \sigma^2 \right\rangle$ for parameters $N=100$, $d=4$ and $c=0.5$, over various values of network randomisation parameter $p$, and various sampling lengths $L$. For each parameter combination, the mean is taken over 2000 network realisations.

Crucially, this demonstrates that the analytic results are well validated against the empirical experiments, because the empirical results converge exponentially to the expected analytic values as the number of samples (which $\left\langle \sigma^2 \right\rangle_E$ is averaged over) increases.
Moreover, this convergence is observed across all values of network randomisation parameter $p$, from regular ring to small-world to random networks as $p$ increases from 0 to 1.
We also see that there is a smaller relative error for random networks for the same number of samples $L$; we will explain why this is the case after we analytically relate synchronizability to network structure in the next section.

\begin{figure}
		\includegraphics[width=0.95\columnwidth]{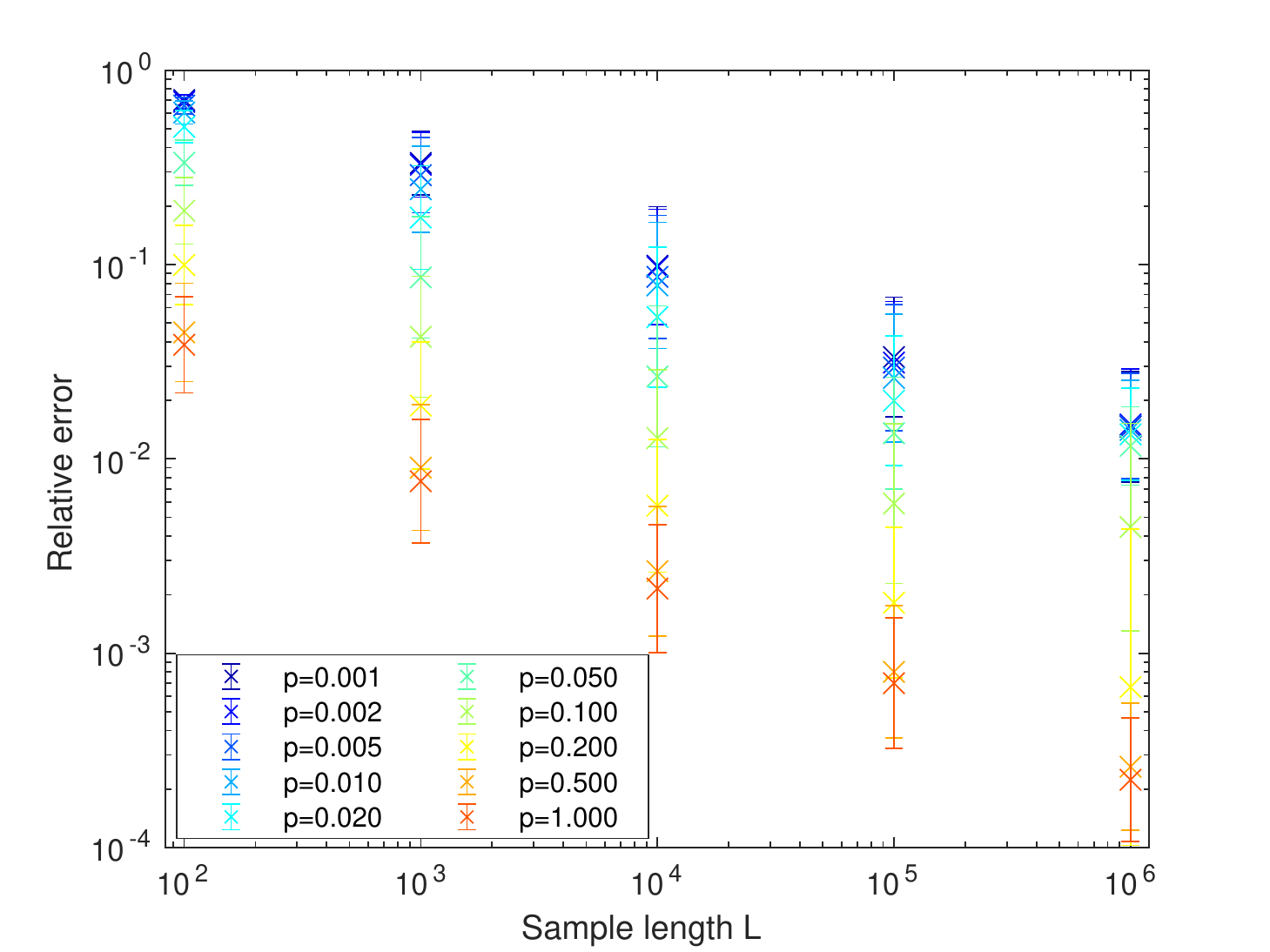}
	\caption{\label{fig:convergence} \textit{Numerical results validating} our analytic expression for average steady-state distance from synchronisation $\left\langle \sigma^2 \right\rangle$ for continuous-time dynamics against empirical measurements, throughout a small-world transition on an $N=100$ ring network with network randomization parameter $p$.
	Parameters for the network connectivity matrix are described in \fullSubSecRef{Covariance}{numericalValidation}, with $d=4$ and $c=0.5$.
	The plot shows the relative absolute error of the empirical measurements $\left\langle \sigma^2 \right\rangle_E$, averaged over 2000 network realisations, versus the number of time series samples $L$ for the empirical measurements for each network.
	Error bars represent the standard deviation across realisations in log space.}
\end{figure}

\section{Results: Relationship of network motifs to synchronizability}
\label{sec:motifRelationships}
In the spirit of Barnett et al.'s~\cite{barn09b,barn11a} analysis of the Tononi-Sporns-Edelman (TSE) complexity \cite{ton94} in networks (since inspiring or mirrored by studies of several other measures of dynamics on networks \cite{pernice2011,lizier2012storageLoops,hu2014,novelli2020,schwarze2021}), we now use the power-series expansions in \eqs{synchronizabilityContinuous}{synchronizabilityDiscrete} to directly reveal the roles of motif structures in $\left\langle \sigma^2 \right\rangle$.

First, we define the following notation.
We use $\walkmotif{a}{b}{M}$ to represent the \textit{weighted walk motif count} of all directed walks from node $a$ to $b$ of length $M$:
\begin{align}
	\walkmotif{a}{b}{M} & = \sum_{n_1,n_2 \ldots n_{M-1}}{C_{an_1}C_{n_1n_2} \ldots C_{n_{M-1}b}}, \nonumber \\
	& = (C^M)_{ab} = ((C^M)^T)_{ba},
\end{align}
and for shorthand write $C^M_{ab}$ for $(C^M)_{ab}$ as the $(a,b)$ entry in $C^M$ (as distinct from the edge weight $C_{ab}$ raised to the power $M$).

We also use $\walkprod{a}{b}{M_1}{a}{e}{M_2}$ to represent \textit{weighted dual walk motif counts} as the product of the two weighted walk motif counts starting from node $a$ and finishing at nodes $b$ and $e$ respectively:
\begin{align}
	\walkprod{a}{b}{M_1}{a}{e}{M_2} & = \walkmotif{a}{b}{M_1}\walkmotif{a}{e}{M_2}, \nonumber \\
	& = (C^{M_1})_{ab}(C^{M_2})_{ae}, \\
	& = ((C^{M_1})^T)_{ba}(C^{M_2})_{ae}
	\label{eq:weightedDualWalkMotifCounts}.
\end{align}
The sub-case of $\walkprod{a}{b}{M_1}{a}{b}{M_2}$ (i.e. where the walks start at node $a$ and end at node $b$ in both cases) are referred to as \textit{weighted closed dual walk motif counts}, representing the weighted counts of pairs of walks starting at node $a$ and ending at node $b$ via walks of length $M_1$ and $M_2$ respectively.

By convention, $\walkmotif{a}{b}{0} := \delta_{ab}$, so $\walkprod{a}{b}{0}{a}{e}{M_2} = \walkmotif{a}{e}{M_2}$ when $a=b$.
Several examples of walk motifs included in these counts are shown in \fig{motifsForContinuous} and \fig{motifsForDiscrete}.
Both $\walkmotif{a}{b}{M}$ and $\walkprod{a}{b}{M_1}{a}{e}{M_2}$ are examples of \textit{weighted process motif counts} \cite{schwarze2021}, being counts of structured sets of walks on a network, and we note that many such process motifs or walks can occur on a given underlying structural motif \cite{schwarze2021}.

Finally, note that our expansions will utilize the key result that the trace of a given matrix $A$ after the centering projections are applied to produce $\mathrm{trace}(U^TAU)$ is given by:
\begin{equation} \mathrm{trace}(U^TAU) = \sum_i{A_{ii}} - \frac{1}{N} \sum_{i,j}{ A_{ij} },
\label{eq:traceInOrthogonalBookend}
\end{equation}
as proven in \lemmaRef{traceInOrthogonalBookend} in \app{traceInOrthogonalSpace}.

\subsection{Continuous-time processes}
\label{sec:synchronizabilityMotifsContinuous}

Now, expanding \eq{synchronizabilityContinuous} using \eq{traceInOrthogonalBookend}, the steady-state distance from sync becomes:
\begin{align}
	\left\langle \sigma^2 \right\rangle = & \frac{\zeta^2}{2\theta} \sum_{m=0}^{\infty} \frac{2^{-m}}{N} \sum_{u=0}^{m} \binom{m}{u} \times \nonumber \\
	 & \ \ \left( \sum_{i}{ ((C^u)^T C^{m-u})_{ii}  } \nonumber - \frac{1}{N}\sum_{i,j}{ ((C^u)^T C^{m-u})_{ij} } \right),
\end{align}
so:
\begin{align}
	\left\langle \sigma^2 \right\rangle = \frac{\zeta^2}{2\theta} \sum_{m=0}^{\infty} & \frac{2^{-m}}{N} \sum_{u=0}^{m} \binom{m}{u} \times \nonumber \\
				& \left( \sum_{i,k}{ C^u_{ki} C^{m-u}_{ki} } - \frac{1}{N} \sum_{i,j,k}{ C^u_{ki} C^{m-u}_{kj} } \right), \\
	\left\langle \sigma^2 \right\rangle =  
			\frac{\zeta^2}{2\theta} \sum_{m=0}^{\infty} & \frac{2^{-m}}{N} \sum_{u=0}^{m} \binom{m}{u} \times \nonumber \\
				& \sum_{i,k}{
			\left( \walkprod{k}{i}{u}{k}{i}{m-u} - \frac{1}{N} \sum_{j}{ \walkprod{k}{i}{u}{k}{j}{m - u} } \right) },
	\label{eq:synchronizabilityMotifsContinuous}
\end{align}
using \eq{weightedDualWalkMotifCounts}.
 
We see that the steady-state distance from synchronization $\left\langle \sigma^2 \right\rangle$ is determined by weighted dual walk motif counts, as the difference between contributions of:
\begin{enumerate}
	\item \textit{closed dual walk motif} counts $\walkprod{k}{i}{u}{k}{i}{m-u}$, being in general \textit{feedforward loop motifs} from source nodes $k$ to targets $i$ in lengths $u$ and $m-u$, including the sub-cases of:
	\begin{enumerate}
		\item \textit{feedback loop motifs} via $\walkprod{i}{i}{u}{i}{i}{m-u}$ (where $k= i$) as the motif of two feedback loops on nodes $i$ of cycle lengths $u$ and $m-u$ (or a single feedback loops $\walkmotif{i}{i}{m}$ of cycle length $m$ when $u = 0$ or $m$); and
	\end{enumerate}
	\item \textit{all dual walk motif} counts $\walkprod{k}{i}{u}{k}{j}{m - u}$ from sources $k$ to nodes $i$ and $j$ over (potentially different) lengths $u$ and $m-u$, averaged over all $j$.
\end{enumerate}
Some sample low-order closed and other dual walk motifs contributing here to $\left\langle \sigma^2 \right\rangle$ are shown in \fig{motifsForContinuous}.

We will discuss the interpretation of this result in \secRef{discussion}.

\begin{figure}
		\subfigure[$\walkprod{k}{i}{u}{k}{i}{m-u}$]{\label{fig:k_i_u_k_i_m-u}\includegraphics[height=\generalMotifClosedHeight]{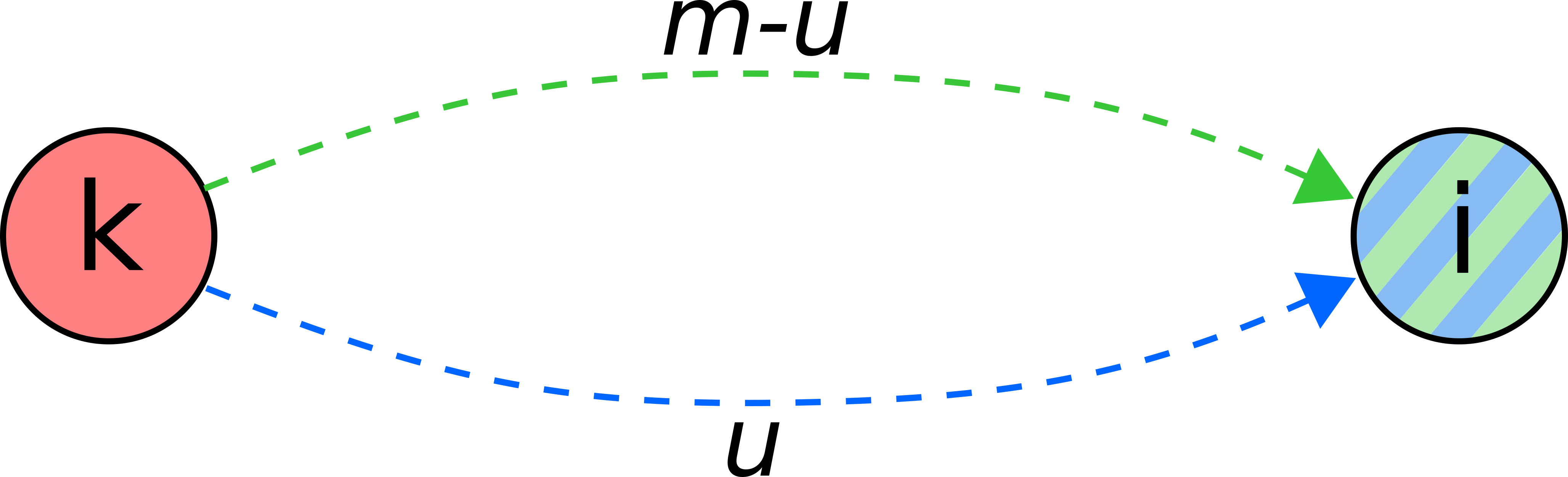}} {  }
		\subfigure[$\walkprod{k}{i}{u}{k}{j}{m-u}$]{\label{fig:k_i_u_k_j_m-u}\includegraphics[height=\generalMotifOpenHeight]{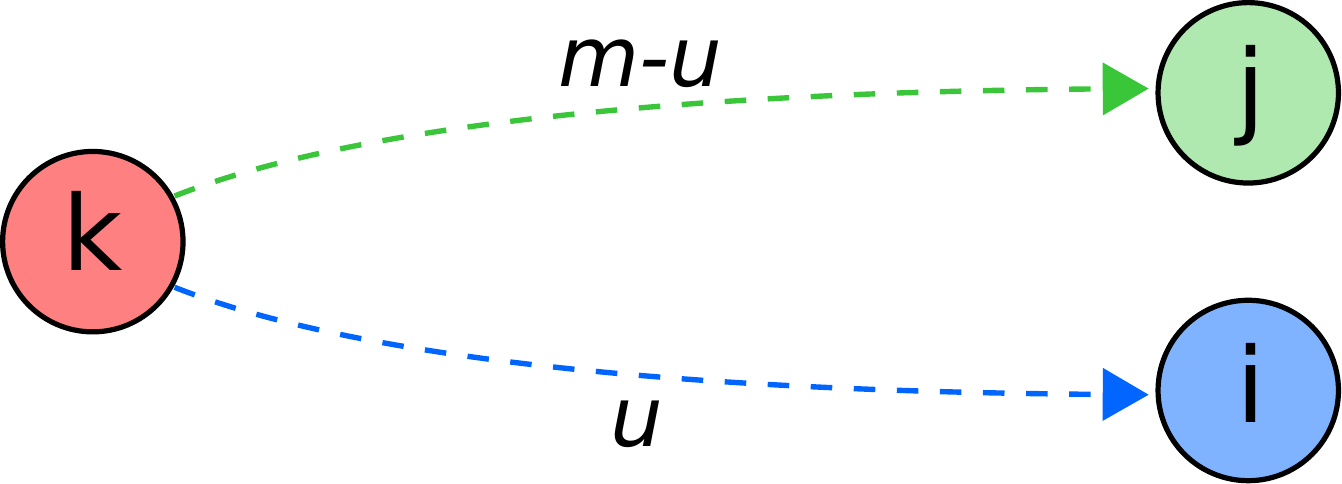}} {  }
		\\
		\subfigure[$\walkmotif{i}{i}{2}$]{\label{fig:i_i_2}\includegraphics[width=\motifFigWidth]{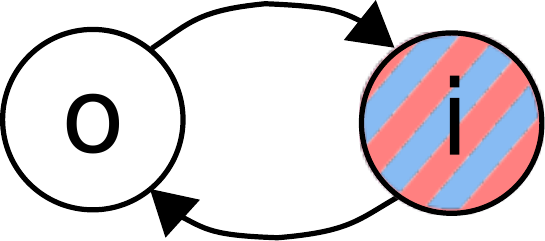}} {  }
		\subfigure[$\walkmotif{i}{i}{3}$]{\label{fig:i_i_3}\includegraphics[height=\motifFigHeight]{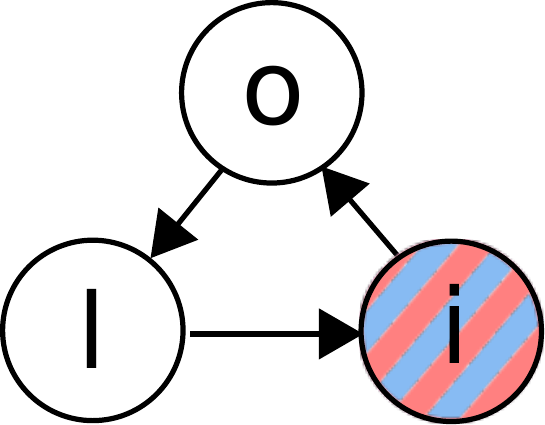}} {  }
		\subfigure[$\walkprod{k}{i}{1}{k}{i}{2}$]{\label{fig:k_i_1_k_i_2}\includegraphics[height=\motifFigHeight]{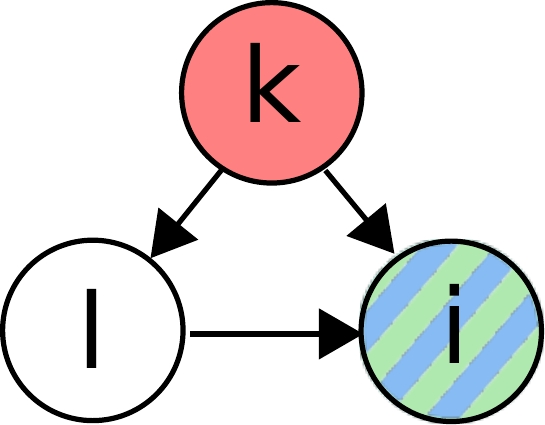}} {  }
		\subfigure[$\walkprod{k}{i}{1}{k}{j}{2}$]{\label{fig:k_i_1_k_j_2}\includegraphics[height=\motifFigHeight]{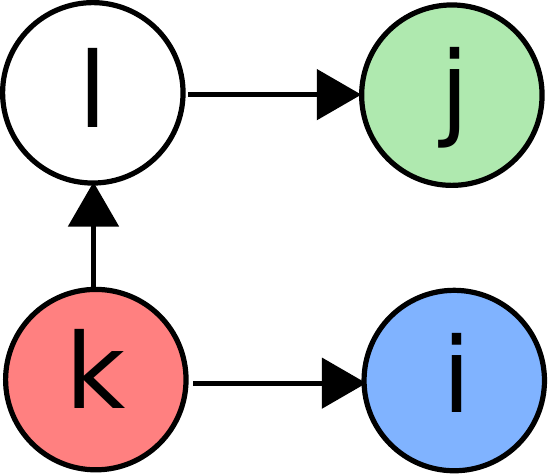}} {  }
	\caption{\label{fig:motifsForContinuous} Process motifs implicated in calculation of $\left\langle \sigma^2 \right\rangle$ for the \emph{continuous-time} process, being the difference between weighted counts of \subref{fig:k_i_u_k_i_m-u} closed dual walk motifs $\walkprod{k}{i}{u}{k}{i}{m-u}$ (with two walks from node $k$ to $i$ in $m-u$ and $u$ steps respectively) and \subref{fig:k_i_u_k_j_m-u} all dual walk motifs $\walkprod{k}{i}{u}{k}{j}{m-u}$ (with a walk from node $k$ to $i$ in $u$ steps and a walk to $j$ in $m-u$ steps respectively) averaged over $j$.
	Samples of process motifs these involve include: \emph{feedback loops} \subref{fig:i_i_2} and \subref{fig:i_i_3}, \emph{feedforward loops} \subref{fig:k_i_1_k_i_2},
	and general open \emph{dual walk motifs} where $j \neq i$ \subref{fig:k_i_1_k_j_2},
	as well as all of those shown for the discrete-time process in \fig{motifsForDiscrete}.
	Subfigure labels indicate the weighted counts to which the displayed process motifs contribute.
	Nodes are coloured as per $k$, $i$ and $j$ for $\walkprod{k}{i}{u}{k}{j}{m-u}$ in \subref{fig:k_i_u_k_j_m-u}, with cross-hatching on nodes indicating node $k = i$ in \subref{fig:i_i_2} and \subref{fig:i_i_3}, and node $j = i$ in \subref{fig:k_i_u_k_i_m-u} and \subref{fig:k_i_1_k_i_2}.}
\end{figure}

\subsection{Discrete-time processes}

Expanding \eq{synchronizabilityDiscrete} in a similar way to the continuous process (see full derivation in \app{discreteExpansion}) we obtain:
\begin{align}
	\left\langle \sigma^2 \right\rangle & = \frac{\zeta^2}{N} \sum_{u=0}^{\infty}{\left( \sum_{i,k}{ \walkprod{k}{i}{u}{k}{i}{u}	} - \frac{1}{N}\sum_{i,j,k}{ \walkprod{k}{i}{u}{k}{j}{u}	} \right)}, \nonumber \\
	& = \zeta^2(1-\frac{1}{N} ) + \frac{\zeta^2}{N} \sum_{u=1}^{\infty}{ \sum_{i,k}{ \left( \walkprod{k}{i}{u}{k}{i}{u} - \frac{1}{N} \sum_{j}{ \walkprod{k}{i}{u}{k}{j}{u} }  \right) } }
	\label{eq:synchronizabilityMotifsDiscrete}.
\end{align}

We see that $\left\langle \sigma^2 \right\rangle$ here is determined similarly from weighted dual walk motif counts, this time as the difference between contributions of:
\begin{enumerate}
	\item \textit{closed dual walk motif} counts $\walkprod{k}{i}{u}{k}{i}{u}$ from source nodes $k$ to targets $i$ in two walks of the same length $u$, including the sub-cases of:
	\begin{enumerate}
		\item \textit{feedback loop motifs} via $\walkprod{i}{i}{u}{i}{i}{u}$ (where $k=i$) as the motif of two \textit{feedback loops} on node $i$ of the same cycle lengths $u \geq 1$; and
	\end{enumerate}
	\item all \textit{dual walk motif counts} $\walkprod{k}{i}{u}{k}{j}{u}$ from sources $k$ to nodes $i$ and $j$ over the same length $u$, averaged over all $j$.
\end{enumerate}
Samples of process motifs contributing to the lowest-order closed
and other
dual walk motif counts relevant to $\left\langle \sigma^2 \right\rangle$ here are shown in \fig{motifsForDiscrete}.

\begin{figure}
		\subfigure[$\walkprod{k}{i}{u}{k}{i}{u}$]{\label{fig:k_i_u_k_i_u}\includegraphics[height=\generalMotifClosedHeight]{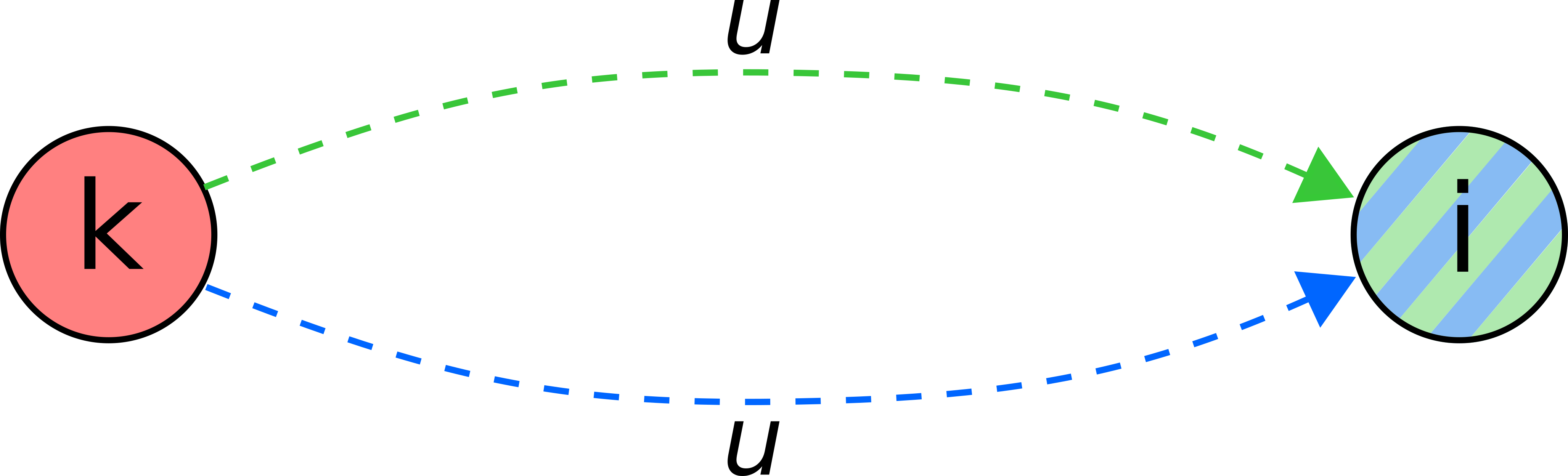}} {  }
		\subfigure[$\walkprod{k}{i}{u}{k}{j}{u}$]{\label{fig:k_i_u_k_j_u}\includegraphics[height=\generalMotifOpenHeight]{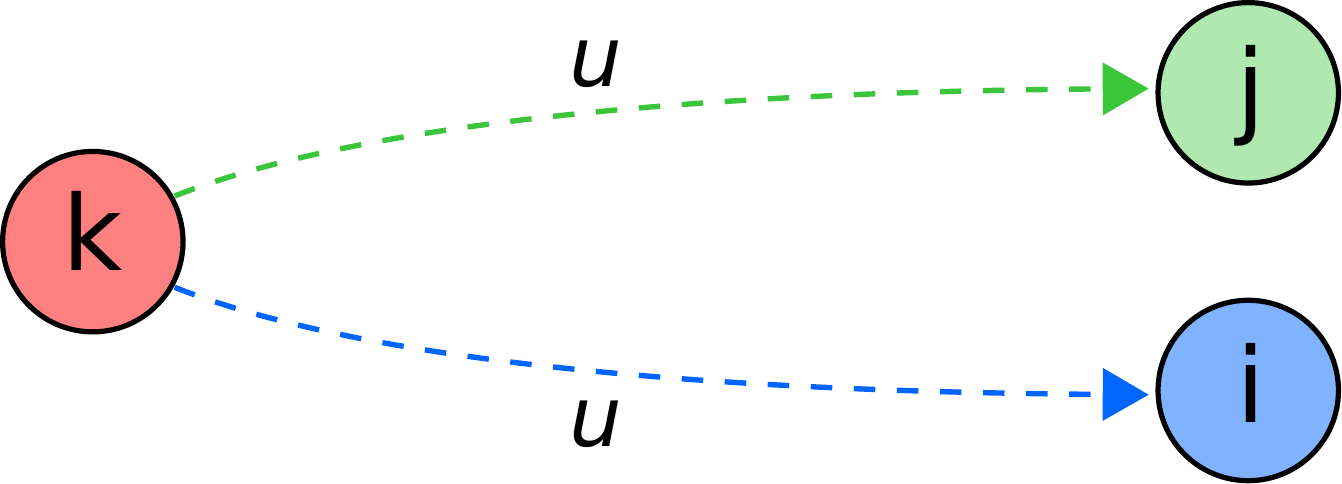}} {  }
		\\
		\subfigure[$\walkprod{k}{i}{2}{k}{i}{2}$]{\label{fig:k_i_2_k_i_2}\includegraphics[height=\motifFigHeight]{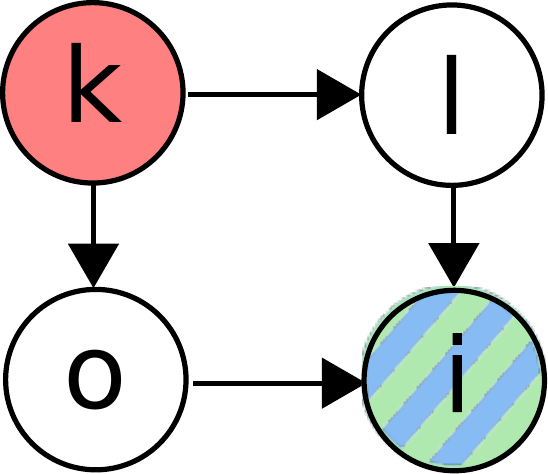}} {  }
		\subfigure[$\walkprod{k}{i}{1}{k}{j}{1}$]{\label{fig:k_i_1_k_j_1}\includegraphics[height=\motifFigHeight]{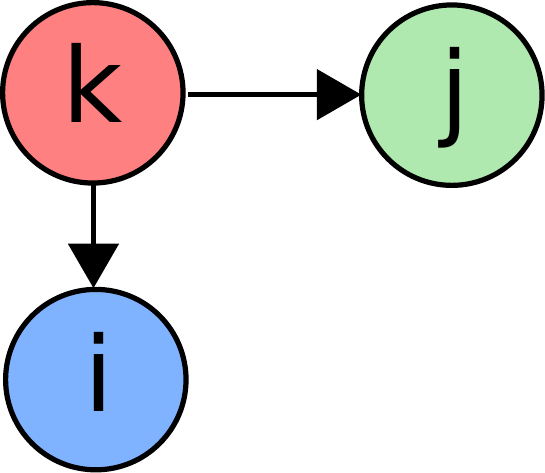}} {  }
		\subfigure[$\walkprod{k}{i}{2}{k}{j}{2}$]{\label{fig:k_i_2_k_j_2}\includegraphics[height=\motifFigHeight]{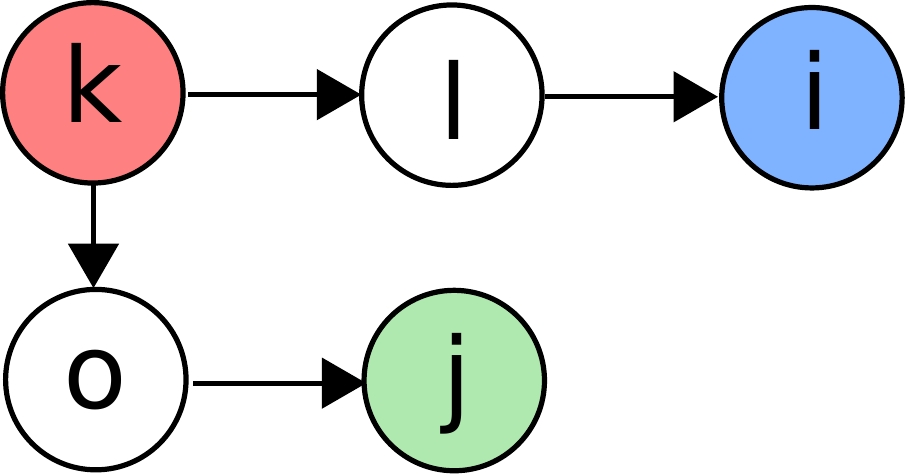}} {  }
	\caption{\label{fig:motifsForDiscrete} Process motifs implicated in calculation of $\left\langle \sigma^2 \right\rangle$ for the \textit{discrete-time} process, being the difference between weighted counts of \subref{fig:k_i_u_k_i_u} closed dual walk motifs $\walkprod{k}{i}{u}{k}{i}{u}$ and \subref{fig:k_i_u_k_j_u} averages of all dual walk motifs $\walkprod{k}{i}{u}{k}{j}{u}$ (with the dual walks of the \textit{same length} $u$ in both cases).
	Samples of process motifs these involve include: the \emph{feedback loops} shown in \fig{motifsForContinuous}, \emph{feedforward loops} of the same length $u$ \subref{fig:k_i_2_k_i_2},
	and general open \emph{dual walk motifs} of the same length $u$ where $j \neq i$ \subref{fig:k_i_1_k_j_1} and \subref{fig:k_i_2_k_j_2}.
	Subfigure labels indicate the weighted counts to which the displayed process motifs contribute.
	Nodes are coloured as per $k$, $i$ and $j$ for $\walkprod{k}{i}{u}{k}{j}{u}$ in \subref{fig:k_i_u_k_j_u}, with cross-hatching on nodes indicating node $j = i$ in \subref{fig:k_i_u_k_i_u} and \subref{fig:k_i_2_k_i_2}.}
\end{figure}

\section{Further analysis and discussion}
\label{sec:discussion}

\subsection{Relative prevalence of closed dual walk process motifs determines synchronizability}
\label{sec:interpretation}

Whilst the precise process motifs involved in determining synchronizability for continuous- and discrete-time processes are subtly different in \eq{synchronizabilityMotifsContinuous} and \eq{synchronizabilityMotifsDiscrete}, the overall story is quite similar.
Synchronizability is reduced (with increasing $\left\langle \sigma^2 \right\rangle$) as the weighted count of closed dual walk motifs increases compared to what may be \textit{expected} from all dual walk motifs ending at any two nodes (not necessarily ending at the same node).
We can phrase this in several ways. 
For node pairs $k$ and $i$ connected by a walk of length $u$, \eq{synchronizabilityMotifsContinuous} can be interpreted as estimating the difference between the actual amount of (weighted) redundancy from $k$ to $i$ of length $m-u$ (captured in $\walkprod{k}{i}{u}{k}{i}{m-u}$) versus the amount expected if the walks of length $m-u$ from $k$ to $i$ were weighted as walks from $k$ to all other nodes $j$ on average (captured in $\frac{1}{N}\sum_j{\walkprod{k}{i}{u}{k}{j}{m-u}}$). The more such redundancy in our walks, the less synchronizable the network becomes (with higher $\left\langle \sigma^2 \right\rangle$).
Put another way, it could be seen as a weighted consideration of whether walks of length $m-u$ from node $k$ -- which has a walk to $i$ in $u$ steps -- preferentially end back at node $i$, as compared to such walks ending at another node $j$ of the network.

Indeed, this is much easier to interpret if we assume for a moment that all edge weights are positive and of constant magnitude, and compare networks of fixed degree.
In this case, the interpretation becomes largely a numbers game, and we see immediately that synchronizability is reduced ($\left\langle \sigma^2 \right\rangle$ is increased) as more walks from nodes $k$ to $i$ become closed by other directed walks from $k$ to $i$ (rather than those walks reaching some other node $j$).
Here then, the more feedback and feedforward process motifs or loops we have in our network (for discrete-time only closed process motifs of the same length), or in other words the more clustered structure we have, the worse the synchronizability of the network becomes.

\subsection{Detrimental effects of clustered structure}
\label{sec:clusteredStructure}

\begin{figure*}
		\sbox0{\subfigure[$d=2$,$c=0.5$]{\label{fig:d2_c0_5}\includegraphics[width=0.95\columnwidth]{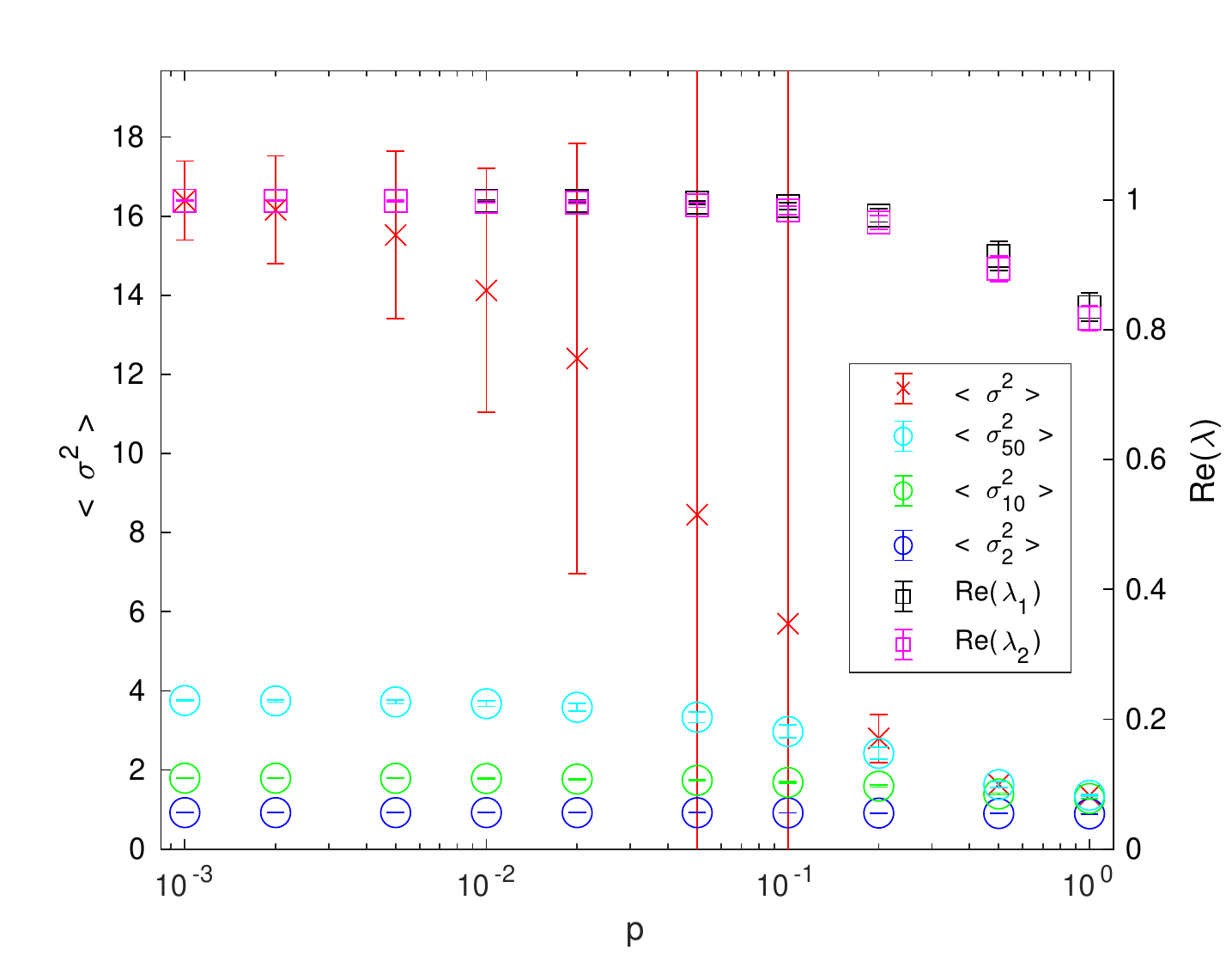}}}
		\sbox1{\subfigure[$d=4$,$c=0.5$]{\label{fig:d4_c0_5}\includegraphics[width=0.95\columnwidth]{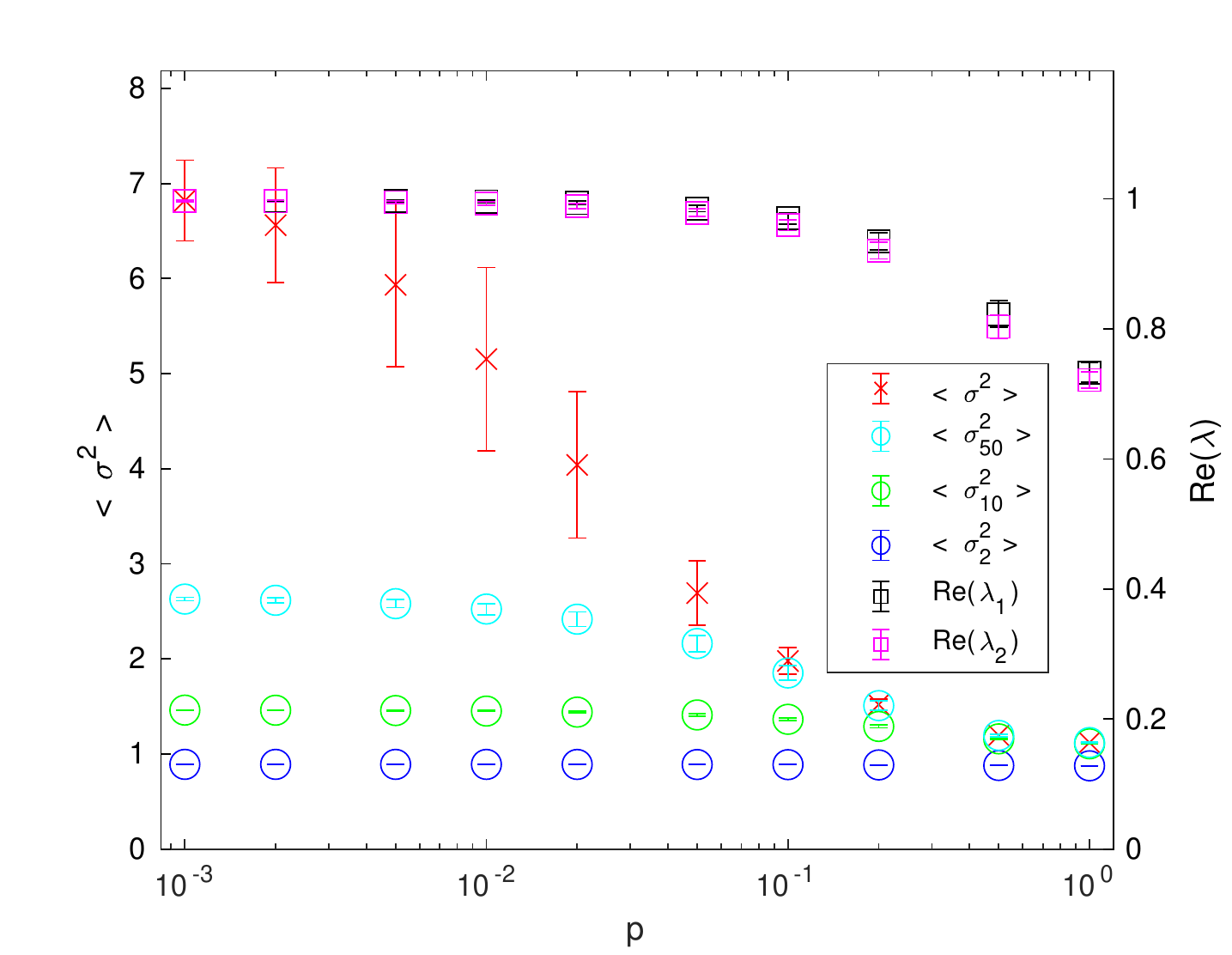}}}
		\sbox2{\subfigure[$d=8$,$c=0.5$]{\label{fig:d8_c0_5}\includegraphics[width=0.95\columnwidth]{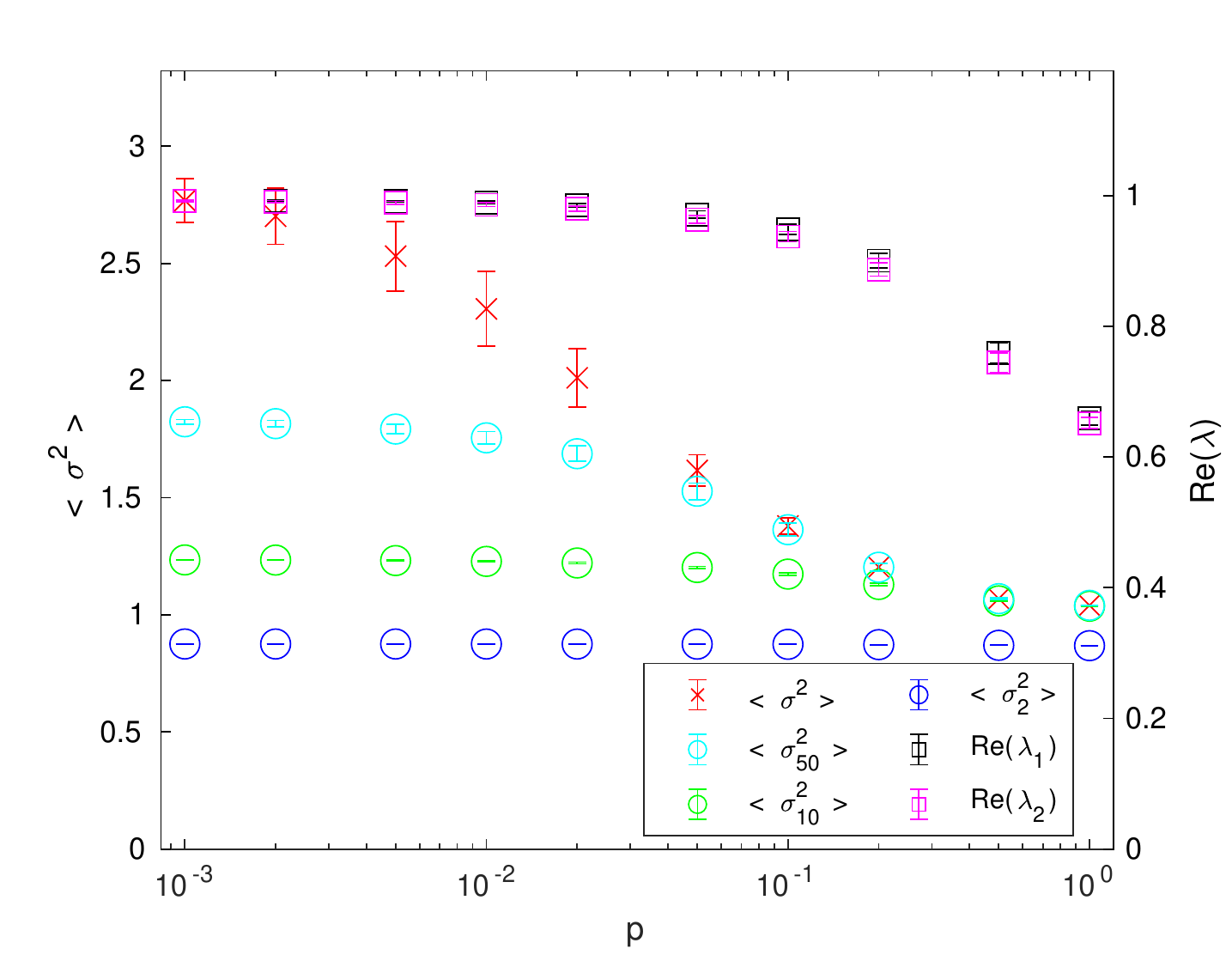}}}
		\sbox3{\subfigure[$d=4$,$c=0.1$]{\label{fig:d4_c0_1}\includegraphics[width=0.95\columnwidth]{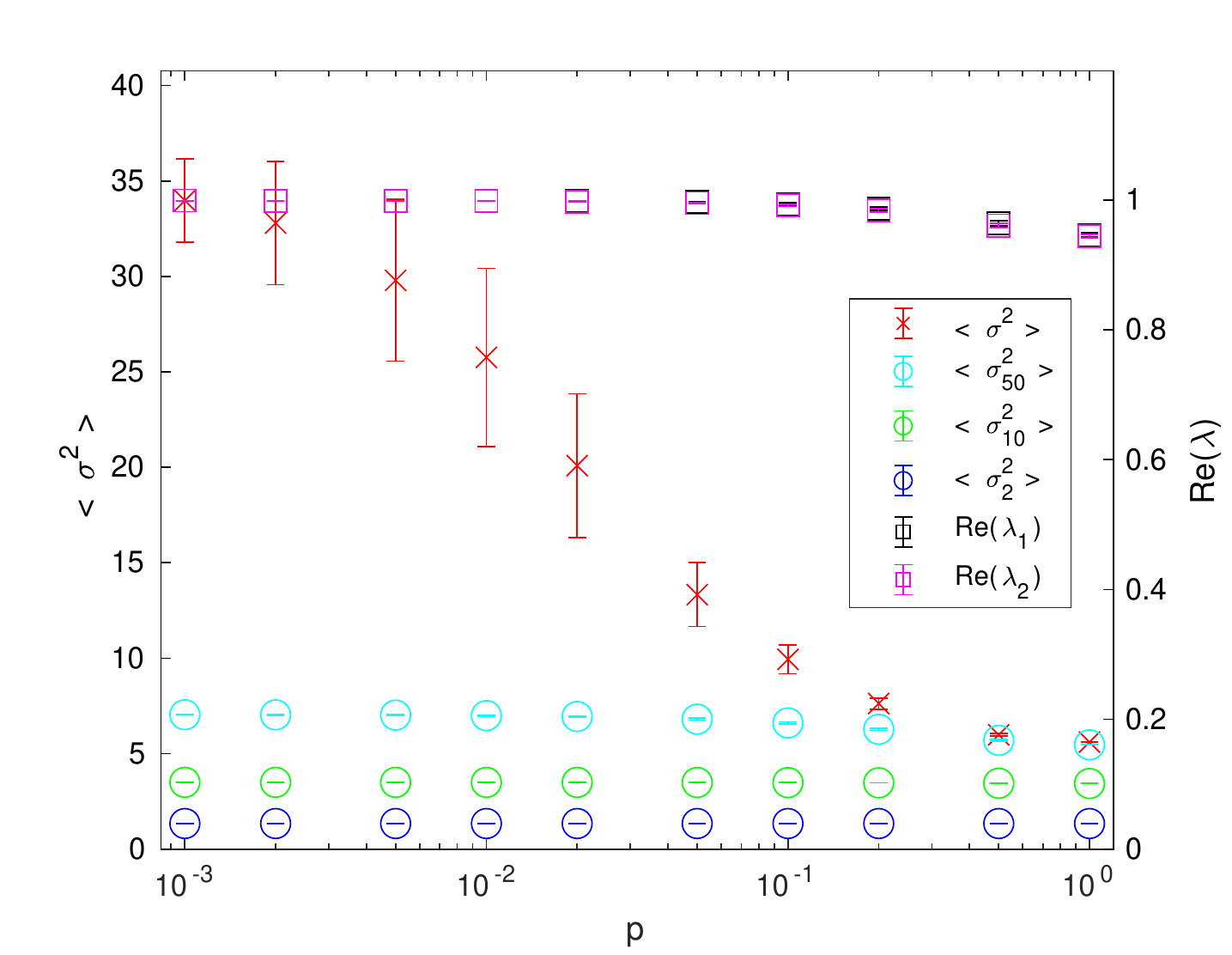}}}
		\sbox4{\subfigure[$d=4$,$c=1.0$]{\label{fig:d4_c1_0}\includegraphics[width=0.95\columnwidth]{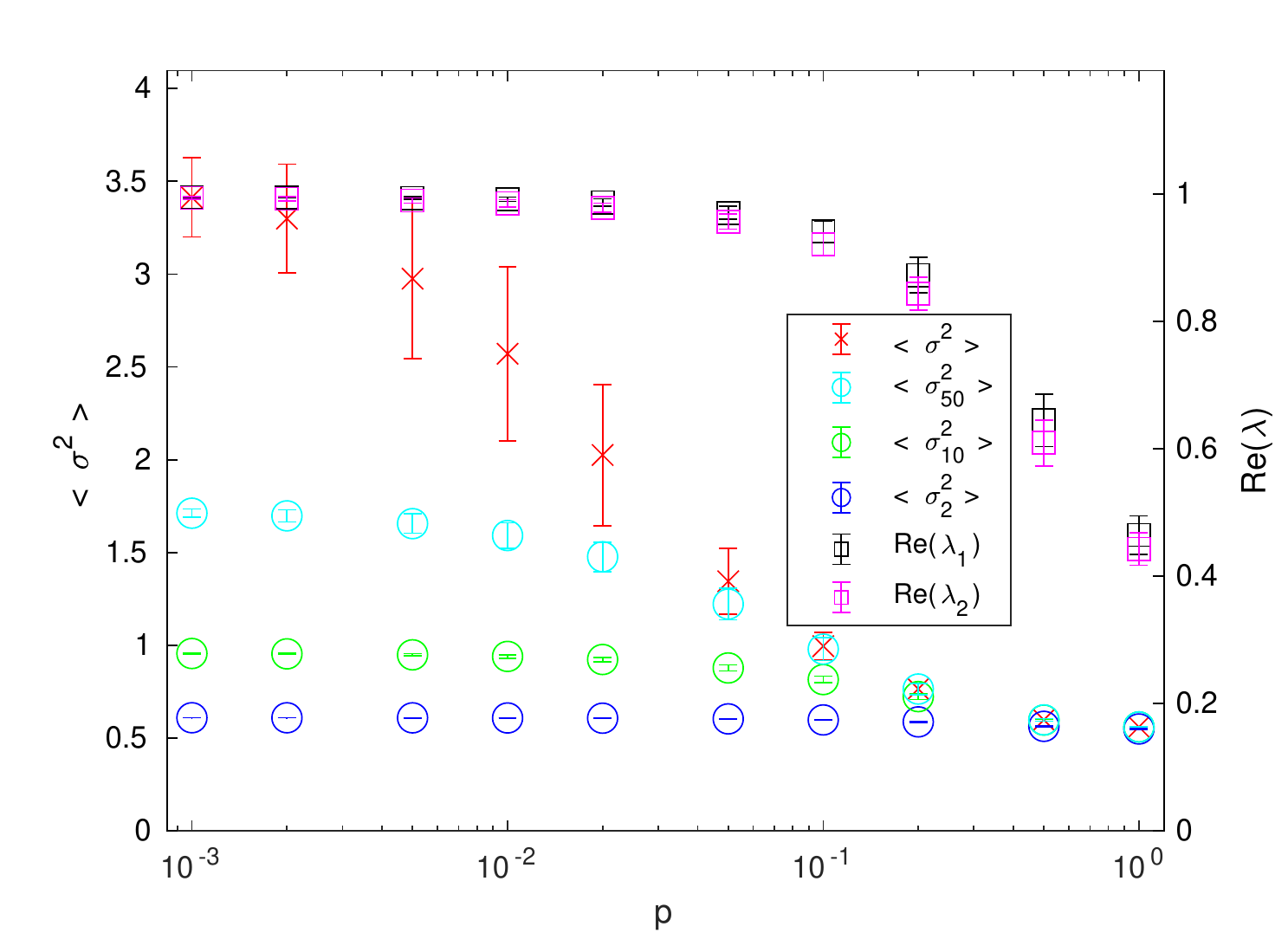}}}
		\sbox5{\subfigure[$d=4$,$p=0.001$]{\label{fig:d4_p0_001}\includegraphics[width=0.95\columnwidth]{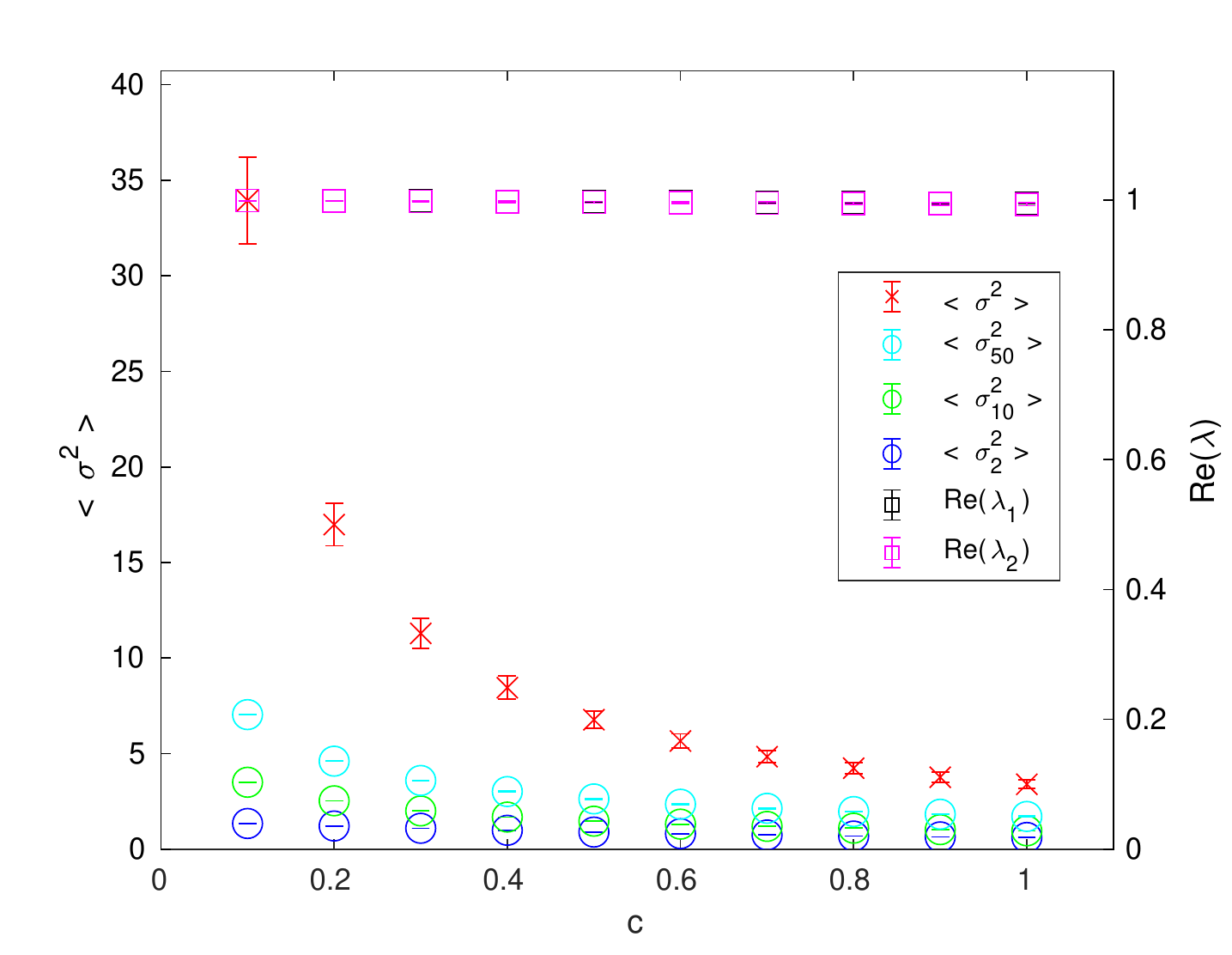}}}
 \centering
  \usebox0\hfil \usebox3\par
  \usebox1\hfil \usebox4\par
  \usebox2\hfil \usebox5
	\caption{\label{fig:smallWorld} \textit{Numerical results for synchronization throughout a small-world transition} for continuous-time dynamics on an $N=100$ Watts-Strogatz ring network.
	Parameters for the network connectivity matrix are described in \fullSubSecRef{Covariance}{numericalValidation}, including in-degree $d$, total incoming (non-self) connection weight $c$, and network randomization parameter $p$ (e.g. in \subrefs{d2_c0_5}{d4_c1_0} regular ring networks appear on the left as $p \rightarrow 0$ and random networks on the right as $p \rightarrow 1$).	
	Various measures and heuristics for the quality of synchronisation computed analytically from the network structure alone (without simulating dynamics) are plotted, including: the average steady-state distance from synchronisation $\left\langle \sigma^2 \right\rangle$ computed via our new method in \eq{synchronizabilityContinuous} (presented in \fullSubSecRef{Covariance}{clusteredStructure}),
	 and low-order approximations $\left\langle \sigma^2_M \right\rangle$ (defined in \eq{synchronizabilityMotifsContinuousLowOrder} in \fullSubSecRef{discussion}{strongerShorter}),
	 and the largest two real components of the eigenvalues of $C$ (except for $\psi_0$, discussed in \fullSubSecRef{discussion}{relation}).
	 The mean value for each measure for each parameter combination is taken over 2000 network realisations and error bars represent the standard deviation.
	 \fig{d2_c0_5_fullErrorBars} in \app{suppFigs} reproduces \subref{fig:d4_c0_5} showing the full extent of the error bars.
}
\end{figure*}

The above provides a direct \textit{explanation} for the result that maximally synchronizable networks have no directed loops \cite{nishikawa2006}, and moreover the widely-observed result \cite{nishikawa2003,atay04a,korniss2003,barahona2002,grabow2010,kim13a,kim2015,aren08} that random networks are more synchronizable than small-world or regular ring networks.
In other words, because small-world or regular networks simply have more feedback and feed-forward process motifs in their more clustered structure, they exhibit worse synchronisability in general.
This result applies directly where edge weights are positive, but can also be expected to be similar when the edge weights are positive on average, as is commonly the case in biological networks such as in the mammalian cortex \cite{barn11a}.

We numerically explore this idea in \fig{smallWorld}, continuing our experiment on sample network structures throughout a small-world transition on an $N = 100$ Watts-Strogatz ring network, in continuous-time with $\theta = \zeta = 1$, with various in-degrees $d$ and total incoming connection weights $c$ as outlined in \fullSubSecRef{Covariance}{numericalValidation}.
Again, Matlab code for these experiments is distributed in the \textit{linsync} toolbox \cite{linsync}.
\subfigs{d2_c0_5}{d4_c1_0}, for a range of different combinations of in-degree $d$ and connection weight $c$, each directly show a strong transition in the steady-state distance from synchronization $\left\langle \sigma^2 \right\rangle$, changing from being very large (i.e. relatively poor synchronization) for regular ring network structures (low $p$) through to a substantial proportion smaller (i.e. relatively good synchronization) for random network structures (high $p$).
The relative synchronizability improves rapidly through the small-world regime; e.g. for $d=4$, $c=0.5$ (\fig{d4_c0_5}) we observe $\left\langle \sigma^2 \right\rangle$ dropping by around 25\% with only 1\% of links randomized, and by a factor of 3 by the time 10\% of links are randomized.

We also see much less variance in $\left\langle \sigma^2 \right\rangle$ across network samples once 5-10\% of the edges are randomized, since the networks then become more homogeneous across nodes and therefore across network samples as well.
(The initial large increase in variance for $d=2$,$c=0.5$ in \fig{d2_c0_5} is discussed in \app{formForSymmetricC}).
This greater homogeneity across nodes also leads to lower variance in dynamics, which would explain the smaller relative error of empirical results against the analytic values of $\left\langle \sigma^2 \right\rangle$ in \fig{convergence}.

The result that regular and small-world networks exhibit worse synchronization in general is subtle and is not just a numbers game though -- we emphasise it is about the weighted \textit{difference} of closed versus all dual walk process motifs.
More clustered structure will lead to worse synchronizability under the above constraints of positive constant magnitude edge weights and fixed degree, but does not necessarily mean worse synchronizability when relevant aspects of the network are not well controlled, as is made clear by Arenas et al. \cite{aren08} and Nishikawa and Motter \cite{nish10a}.
For example, increasing the degree of nodes in a small-world network will lead to higher counts of loop process motifs, but eventually this leads to a fully connected network which is known to have optimal synchronizability \cite{korn07}.
\footnote{We can already see in \subfigs{d2_c0_5}{d8_c0_5} that increasing the in-degree substantially drops $\left\langle \sigma^2 \right\rangle$, to a point that for $d=8$ with $p \rightarrow 0$ it indicates a lower distance from synchronisation than for substantially more randomised networks with $d=2$. This would be confusing if one thought only about closed walk process motif counts, without thinking about the difference between closed and open and properly controlling relevant aspects of the networks.}
As above, one needs to consider not only numbers of such process motifs, but the relative difference between closed and all such counts: in the fully connected network case there may be many closed dual walk process motifs, but there are many more that remain unclosed.
Indeed, this provides a direct argument against extrapolating from studies of synchronizability of motifs in isolation \cite{morenovega2004,lodato2007,li2010}, because the relative proportions become very skewed compared to what they may look like when embedded in a network.

\subsection{Interpretation in terms of perturbation reinforcement}
\label{sec:perturbationInterpretation}

We can also interpret main results in terms of resilience to perturbations on node $k$ in \eq{synchronizabilityMotifsContinuous} and \eq{synchronizabilityMotifsDiscrete}. Closed dual walks from $k$ to $i$ (again considering positive edge weights only) lead to these perturbations eventually reinforcing on node $i$, driving it away from synchronising with the other nodes (as observed for variance for the same dynamics by Schwarze and Porter \cite{schwarze2021}).
In contrast, a greater proportion of walks to other nodes $j$ will dissipate such perturbations more evenly through the network, and thereby disturb synchronization to a lesser degree.

We can also look at this in terms of competition between clusters for what the dominant value to synchronise around will be: more clustering leads to larger perturbation reinforcement within the cluster and stronger competition between the clusters, which harms overall global synchronisation.
Indeed, this illustrates the known result that increasing synchronisabilty within two largely separate network clusters can harm synchronisabilty of the overall network \cite{atay05a}.
From an information-theoretic perspective, loop process motif structures increase information transfer within such clustered modules \cite{novelli2020}, reinforcing synchronisation within clusters at the expense of synchronisabilty of the overall network.

\subsection{Contrasting continuous- and discrete-time results}
\label{sec:contDiscTime}

Indeed, this interpretation in terms of perturbations allows us to see why the contributing motifs are different for continuous- and discrete-time processes.
For discrete-time processes, perturbations on node $k$ will only reinforce on node $i$ from two walks of the \textit{same} length, since the driving noise $\vec{r}(t)$ is uncorrelated in time.
In continuous-time processes, considered to be more realistic for modelling biological processes \cite{barn09b}, nodes in contrast experience an uncorrelated rate $d\vec{w}(t)$ of perturbations, though their integration over time means that perturbations on node $k$ can be reinforced on node $i$ from two walks of different lengths. Hence, more general closed dual walk motifs contribute to synchronizability for continuous-time processes, including feedforward process motifs with \textit{different} length walks.
As we see from the $\binom{m}{u}$ term in \eq{synchronizabilityMotifsContinuous}, the more similar the lengths of the dual walks of length $u$ and $m-u$ are, the stronger their contribution (for fixed $m$). This was also observed for covariance for the same dynamics \cite{schwarze2021}, since this ensures a maximal reinforcement of the perturbation.

\subsection{Stronger contribution of shorter walks}
\label{sec:strongerShorter}

Shorter walks contribute more strongly to the expressions for $\left\langle \sigma^2 \right\rangle$ for these dynamics, as has been observed for other measures \cite{barn09b,lizier2012storageLoops,schwarze2021}, since under the stability of the dynamics (orthogonal to $\psi_0$) information is gradually ``forgotten'' along the relevant walks.

We can investigate this numerically by defining low-order approximations $\left\langle \sigma^2_M \right\rangle$ to $\left\langle \sigma^2 \right\rangle$, for motif sizes up to an order $M$. In continuous-time, $M$ represents the longest total length of the two dual walks $(m-u, u)$ from $k$ to $i$ considered:
\begin{align}
	\left\langle \sigma^2_M \right\rangle =  
			\frac{\zeta^2}{2\theta} \sum_{m=0}^{M} & \frac{2^{-m}}{N} \sum_{u=0}^{m} \binom{m}{u} \times \nonumber \\
				& \sum_{i,k}{
			\left( \walkprod{k}{i}{u}{k}{i}{m-u} - \frac{1}{N} \sum_{j}{ \walkprod{k}{i}{u}{k}{j}{m - u} } \right) },
	\label{eq:synchronizabilityMotifsContinuousLowOrder}
\end{align}
whilst in discrete-time, $M$ is more naturally defined as the longest length of each of the dual walks of length $u$ from $k$ to $i$ that is considered:
\begin{align}
	\left\langle \sigma^2_M \right\rangle & = \frac{\zeta^2}{N} \sum_{u=0}^{M}{\left( \sum_{i,k}{ \walkprod{k}{i}{u}{k}{i}{u}	} - \frac{1}{N}\sum_{i,j,k}{ \walkprod{k}{i}{u}{k}{j}{u}	} \right)}.
	\label{eq:synchronizabilityMotifsDiscreteLowOrder}
\end{align}

These low-order approximations $\left\langle \sigma^2_M \right\rangle$ for $M = \{2,10,50\}$ are plotted for the various numerical experiments in \fig{smallWorld}, and in all cases we see that the the low-order approximations for $M = \{2,10\}$ account for a proportionally larger component of $\left\langle \sigma^2 \right\rangle$ than the higher-order contributions that would also be included in $\left\langle \sigma^2_{50} \right\rangle$. This directly shows that the shorter walks contribute more strongly than the longer walks.
This effect is particularly striking for more random networks ($p \rightarrow 1$), where the low order approximations are already very close to the true value of $\left\langle \sigma^2 \right\rangle$.
This is because the non-trivial amount of clustered structure in the more regular networks induces many more convergent walks, particularly at longer walk lengths, which need to be accounted for.
The effect is less clear for low cross-coupling cases such as $c=0.1$ in \fig{d4_c0_1} where walks involving the more heavily weighted self-links ($C_{ii}=1-c=0.9$) dominate, and since these are not randomised then there is less difference in the weighted prevalence of convergent walks across network types at shorter walk lengths.

As the coupling strengths are increased though, the longer walks begin to have larger impacts on $\left\langle \sigma^2 \right\rangle$.
This effect was explored for other measures by \cite{barn09b,barn11a}.
We can see directly from \eq{synchronizabilityMotifsContinuous} that network-wide increases in coupling will non-linearly increase $\left\langle \sigma^2 \right\rangle$, until reaching a point where $\left\langle \sigma^2 \right\rangle$ diverges (meaning the network is not synchronizable).
This also reflects the result that any network structure can be made synchronizable by lowering the magnitudes of the $C_{ji}$ \cite{atay06a}.
Illustrating the increased impact of longer walks with coupling strength is somewhat complicated in the numerical experiments here though, because increasing the cross-coupling $c$ necessarily means reducing the self-coupling $C_{ii}=1-c$ (required to maintain $\psi_0$), and both contribute to $\left\langle \sigma^2 \right\rangle$.
Indeed, \fig{d4_p0_001} shows that $\left\langle \sigma^2 \right\rangle$ actually \textit{decreases} as the cross-coupling $c$ increases as observed by \cite{korn07}
(although the relative proportion accounted for by the longer walks increases) -- but again this is solely because of the related increase in self-coupling $C_{ii}=1-c$ in our model, which dominates the changes to $\left\langle \sigma^2 \right\rangle$ there.
A more clear illustration emerges if we vary in-degree $d = \{2,4,8\}$ for fixed total incoming weights $c=0.5$ and therefore fixed self-coupling $C_{ii}=1-c=0.5$ in \subfigs{d2_c0_5}{d8_c0_5}: lowering the in-degree increases the weights on each non-self edge, and although this means there are less cross-edges and less shorter walks, their larger weights leads to a substantial increase in $\left\langle \sigma^2 \right\rangle$ away from synchronisation, and a larger influence of the longer walks in that increase.

\subsection{Lower synchronizability for symmetric networks}
\label{sec:symmetricWorse}

Aside from direct connections contributing via $\walkprod{k}{i}{1}{k}{i}{1}$, the shortest and therefore strongest contributing closed dual walk process motifs are recurrent connections $\walkprod{i}{i}{1}{i}{i}{1}$ and feedback loops of length 2 which are involved e.g. in $\walkprod{i}{i}{2}{i}{i}{u}$.

The latter are of particular interest, since in undirected or symmetric networks $C$ reciprocal connections -- feedback loops of length 2 -- occur for \textit{every} connected pair of nodes.
In comparison to directed networks with the same fixed in-degree, and fixed positive edge weights (or majority positive as above), this will make substantial contributions to increasing $\left\langle \sigma^2 \right\rangle$ in \eq{synchronizabilityMotifsContinuous} and \eq{synchronizabilityMotifsDiscrete}, and thereby reducing synchronizability.
In other words, this explains why undirected networks are generally less synchronizable than corresponding directed networks (as observed e.g. in \cite{motter2005}).

Considering the $\walkprod{i}{i}{2}{i}{i}{2}$ as one such term acting to increase $\left\langle \sigma^2 \right\rangle$ for example, it is easily demonstrated for fixed positive edge weights that the only way for that term to be completely countered by the corresponding $\frac{1}{N} \sum_{j}{ \walkprod{i}{i}{2}{i}{j}{2} }$ term would be for each outgoing neighbour of $i$ to connect onwards to every node in the network. (Which of course, reflects the aforementioned situation in full connectivity where closed dual walk process motifs lose any sense of being comparatively common).

Another perspective on this result is that symmetric connectivity is known to increase autocorrelation and slow down dynamics \cite{marti2018}.
From an information-theoretic perspective, symmetric connectivity and loop process motif structures in general are known to increase self-predictability of nodes embedded in a network, as measured by active information storage \cite{lizier2012storageLoops}, and higher active information storage has been observed to be negatively correlated to synchronizability \cite{ceg11}.
Such self-predictability may indicate a node is less responsive to the dynamics of others in the network, and therefore not synchronizing as well with them.

\subsection{Inhibition versus excitation}
\label{sec:inhibition}

Much of the above intution comes from considering positive of excitatory edge weights only, or majority positive edge weights, and is generally flagged as such.
We must however be aware that the presence of negative or inhibitory edge weights can have dramatic effects on our interpretation of the results here. The mathematics of \eq{synchronizabilityMotifsContinuous} and \eq{synchronizabilityMotifsDiscrete} are unchanged, we simply need to remember that the weighted walk counts incorporate these negative edge weights.
This has been emphasised particularly by Nishikawa and Motter \cite{nish10a} and is very relevant for this study.

As an example, consider the reciprocally connected pairs referred to within the term $\walkprod{i}{i}{2}{i}{i}{u}$ above as reducing synchronizability -- for positive edge weights.
Where one edge weight in the pair is negative, this switches the sign of the contribution and therefore leads to the relevant term actually \textit{enhancing} synchronization -- assuming that all of the other weights in the other feedback loop $\walkmotif{i}{i}{u}$ as part of this term are positive.
This illustrates the difficulty in extending our intuition to the presence of negative edge weights, since multiple instances of them can be quite confusing.

\subsection{Relation to previous approaches}
\label{sec:relation}

We emphasise the generality of our result in comparison to previous work in not requiring symmetry, nonnegativity or diagonalizability of the weighted coupling matrix $C$.
This is important since real networks are often directional and weighted, and since optimal synchronization requires nondiagonalizable networks \cite{nishikawa2006}.
Indeed, non-diagonalizable matrices $C$ are encountered even with our very simple model for generating Watts-Strogatz ring networks for the numerical experiments in \fig{smallWorld}, e.g. being a substantial proportion of networks for $d=2$, $c=0.5$ from $p \geq 0.05$ and becoming the majority for larger $p$.

Whilst extremal eigenvalues determine whether sync is possible at all, as they control the fastest and the slowest decaying modes, \eq{symmetricCcontinuous} here emphasises that fully characterising the relative synchronizability requires taking \textit{all} of the eigenvalues of $C$ into account (as echoed by \cite{arenas2006}).
This contrasts with common heuristics for relative synchronizability which focus on the extremal eigenvalues or their spread (as per \fullSubSecRef{measuringSync}{syncConditions}).
The common heuristic for continuous-time dynamics is that synchronization is improved as  $\mathrm{Re}(\lambda_1)$, the largest real component of an eigenvalue of $C$ (aside from $\lambda_0=1$ corresponding to $\psi_0$) moves further away from the stability boundary of 1; or in other words synchronization improves as $\mathrm{Re}(\lambda_1)$ decreases.\footnote{This is complementary to the expression of synchronization improving as $\mathrm{Re}(\lambda_{L1})$ for the smallest eigenvalue of $L$ increases \cite{atay05a}.}
To investigate how well this heuristic performs, $\mathrm{Re}(\lambda_1)$ is plotted in the results of our numerical experiments in \fig{smallWorld}.
Each sub-plot -- across a wide range of parameters -- shows that this heuristic provides a remarkably different indication of how the quality of synchronization changes with respect to $p$ and $c$, as compared to the ground truth now provided by $\left\langle \sigma^2 \right\rangle$. In particular, we see in \subfigs{d2_c0_5}{d4_c1_0} that whilst changes in $\left\langle \sigma^2 \right\rangle$ towards improved synchronization are visible well before network randomization parameter $p$ reaches $0.01$ (or on average changes to 4 out of 400 edges here), the heuristic of $\mathrm{Re}(\lambda_1)$ requires an order of magnitude   larger randomisation ($p$ around 0.1) before any change is visible.
This contrast is more stark in the experiment varying cross-coupling strength in \fig{d4_p0_001}, where there is no visible change at all in the heuristic $\mathrm{Re}(\lambda_1)$, despite the underlying distance from synchronization $\left\langle \sigma^2 \right\rangle$ indicating an order of magnitude improvement.
The second-largest real component $\mathrm{Re}(\lambda_2)$ is also included in these plots, with similar results.
This result underlines the importance of using our precise measure of synchronizability rather than the common heuristics from the extremal eigenvalues.

There are other interesting mathematical connections from our result to previous work.
\app{formForSymmetricC}, culminating in \eq{symmetricCcontinuous}, shows for instance how the result for continuous time in \eq{synchronizabilityMotifsContinuous} simplifies for symmetric networks with $\theta=\zeta=1$ to give the existing result \eq{widthAsInverseEigs} by Korniss \cite{korn07} for these in terms of inverse sums of eigenvalues, as it should.
We also contribute a similar simplification for symmetric networks in discrete-time with $\zeta=1$ (derived in \app{formForSymmetricC}):
\begin{align}
	\left\langle \sigma^2 \right\rangle = \frac{1}{N} \sum_{\lambda \neq \lambda_0} \frac{1}{ 1 - \lambda_C^2 }.
	\label{eq:symmetricCdiscreteMainText} 
\end{align}
Whilst the two continuous-time results are equivalent when $|\lambda_v| < 1$ (except for $\lambda_0$), our power series result does not converge outside this range (within $\mathrm{Re}(\lambda_1) < 1$) even though $\left\langle \sigma^2 \right\rangle$ has a solution for symmetric networks from \eq{widthAsInverseEigs}; future work may establish how our solution could be adapted to provide further insight in these conditions.

Further insights using \eq{widthAsInverseEigs} for symmetric networks with $\theta=\zeta=1$ are found by interpreting $C$ as the transition matrix for random walks on a network (with $C_{ji} \geq 0$ being probabilities when $\vec{\psi_0}$ is an eigenvector with eigenvalue $\lambda_0=1$).
Here then, we can immediately observe that the steady-state distance from synchronisation $\left\langle \sigma^2 \right\rangle$ for $C$ in continuous-time in \eq{widthAsInverseEigs} is directly related to Kemeny's constant $K(C)$ \cite{levene2002,riascos2021}, as $\left\langle \sigma^2 \right\rangle = \frac{K(C)}{2N}$.
Kemeny's constant captures the (weighted) average travel times of random walks between two nodes in the network, and can also be interpreted in terms of how well-connected the nodes are, and as an indication of clustering \cite{breen2022}.
Kemeny's constant also directly maps to the ``effective graph resistance'' or Kirchhoff index for regular graphs \cite{palacios2010,wang2017}, which is interpreted as the difficulty of transport in the network, and so we see that $\left\langle \sigma^2 \right\rangle$ also directly maps to this quantity in this case.
These interpretations are consistent with the insights our new power series formalism brings for $\left\langle \sigma^2 \right\rangle$, e.g. the relations to clustering.
Moreover, the insights our formalism brings in terms of the relative abundance of convergent process motifs are thus directly applicable for Kemeny's constant and effective graph resistance, and will find further utility in those studies.
This is particularly important for Kemeny's constant for example, where gaining intuition into the measure has been an ongoing concern \cite{levene2002,breen2022}.
Similarly, our process motif interpretation could be usefully applied to the use of effective graph resistance to study the robustness of complex networks, which has known applications in mitigating cascading failures in electrical networks \cite{wang2017}.
Finally, we note that most work on Kemeny's constant has considered undirected networks \cite{breen2022,riascos2021,wang2017}, and as such our directed process motif interpretation may contribute to studies on directed networks \cite{kirkland2016,pitman2018} here.

Another interesting connection is found in combining \eq{covarianceInProjected} and \eq{traceInOrthogonalBookend}, giving:
\begin{align}
	\left\langle \sigma^2 \right\rangle & = \frac{1}{N} \sum_i{\Omega_{ii}} - \frac{1}{N^2} \sum_{i,j}{ \Omega_{ij} }
	\label{eq:syncWithCovariances}
\end{align}
This directly shows us that $\left\langle \sigma^2 \right\rangle$ is a difference between the average autocovariance of all nodes, and the average covariance between all node pairs (including the autocovariances).
Crucially, this difference converges under the stated conditions via the proofs referred to in \fullSubSecRef{Covariance}{powerSeriesMethod}, even though the various (auto)covariance terms may not (i.e. where $\psi_0$ is an eigenvector).

Indeed, this difference maps directly back to the difference between closed and all dual walk motif counts in our results: the weighted counts for dual walk motifs converging on node $i$ determine the autocovariance for node $i$, whilst the weighted counts of dual walk motifs terminating at nodes $i$ and $j$ determine their covariance.
That these types of process motifs determine the autocovariance and covariance was previously observed by Schwarze and Porter \cite{schwarze2021}, and they necessarily recur here in determining synchronizability as demonstrated above.

We also note that the mathematics of the dual walk motif counts in \eq{synchronizabilityMotifsContinuous} and \eq{synchronizabilityMotifsDiscrete} have some resemblence to the power series expansion of the exponential of the adjacency matrix in the communicability function \cite{estrada_communicability_2008} (albeit with different weightings for the contributions of different walk lengths), and with the terms forming the difference in \eq{syncWithCovariances} then bearing resemblence to self- and non-self-communicability themselves \cite{estrada_physics_2012}.

\subsection{Biological implications}
\label{sec:bioImplications}

Whilst our results consider walks on networks, or process motifs, rather than directly counting structural motifs, we know that increasing counts of structural motifs with closed paths will lead to increasing counts of the closed walk motifs observed here \cite{schwarze2021} (specifically where edge weights are positive on average, as observed earlier being commonly the case in biological networks such as in the mammalian cortex \cite{barn11a}).
And indeed, such closed loop structural motifs are observed in many real world networks.
 
A particularly interesting example, where synchronization is of special relevance, lies in brain networks.
The high level of modularity in brain networks -- which is particularly important in compartmentalising function for cognition, and in conserving wiring length \cite{spo11a} -- also serves to increase counts of closed loop motifs, and therefore push the brain away from global synchronization.
It is interesting to note that synapses of bidirectionally connected pairs of neurons are on average stronger than synapses of unidirectionally connected pairs \cite{marti2018}, which will also specifically act against global synchronization.
These results are important since synchronization is associated with pathological conditions such as epilepsy and Parkinson's disease \cite{aren08}.
Regional synchronization within a module in contrast may be useful, and could be enhanced by more local connectivity within modules.

\section{Conclusion}
\label{sec:conclusion}

In this paper, we have provided the first fully general analytic calculation of synchronizability which requires neither symmetric nor diagonalizable connectivity matrix $C$.
This is an important theoretical advance in and of itself, allowing us to handle network structures which are more commonly found and to compute the steady-state distance from synchronisation $\left\langle \sigma^2 \right\rangle$ without such restrictions for the first time.
Furthermore, this provides a substantial advance over heuristic approximations from the eigenvalue spectra (e.g. examining leading eigenvalues and the spread of the spectrum) that had been previously in use, allowing us to properly compute the synchronizability for a given network structure here.

Perhaps more importantly, we are able to interpret our new result in terms of weighted directed walks on networks, or process motifs.
Indeed, this allows us to fully comprehend the role of network structure in determining synchronizability.
Along these lines, we can now completely explain several well-known results for the first time, including that regular ring and small-world networks tend to be further from synchronization than random networks, and the manner in which clustered structure leads to reduced global synchronization.

These results are more important more generally in characterizing the relationship between network structure and dynamics or function, since addressing this for synchronization has historically been seen as one of the most important canonical problems in this field \cite{aren08,porter2016}.

Along these lines, the results here will have applicability beyond the apparent scope of the Ornstein-Uhlenbeck or Edwards-Wilkinson process and vector-autoregressive dynamics, with the former having been observed as
``entirely equivalent'' to a linearized Wilson-Cowan noisy neural network \cite{barn09b}, and is also similar to covariances of spike rates in systems of coupled Hawkes processes \cite{pernice2011,hu2014}.

There is much scope for further extension of the results included here.
Primarily, the exact results for the steady-state distance from synchronization provided here should be used to systematically investigate the plethora of insights that have been gained in the past using heuristics from the eigenspectra -- as we have shown here such heuristics can easily miss very drastic changes in the synchronization characteristics as we alter network properties.

\newcommand{\ackText}{JL was supported through the Australian Research Council DECRA grant DE160100630,
and The University of Sydney SOAR award.
FB was partially supported by the Alexander von Humboldt foundation and the NSF Grant DMS-0804454 Differential Equations in Geometry.
The authors wish to thank Lionel Barnett and Gyorgy Korniss for clarifying details of their work, and Alice Schwarze, Mac Shine, Ben Fulcher and Karel Devriendt for comments.}

\begin{acknowledgments}
\ackText
\end{acknowledgments}

\bibliography{refs}

\newpage

\renewcommand\appendixname{Supplementary Information}

\appendix
\section{Properties of the centering projection}
\label{app:traceInOrthogonalSpace}

First we define the \emph{centering matrix} \cite{mard95}, the projection $U = I - G$ where $G$ is the \emph{averaging operator} $G_{ij} = 1/N$ (for  $N \times N$ matrices).
The operator $U$ removes the average of the given vector it acts on:
\begin{align}
	\vec{x}U = \vec{x} - \bar{x}\vec{\psi_0},
\end{align}
such that for any vector $\vec{x}$, the average of $\vec{x}U$ is $0$.
This serves to project $\vec{x}$ onto the space \textit{orthogonal} to the synchronized state vector $\vec{\psi_0} = [ 1, 1, \ldots , 1 ]$.
Clearly we have $U^T = U$ and $GU = UG = 0$.
In the next lemmata we list a number of useful properties of $U$:

\begin{lemma}
\label{lem:psi0U}
\begin{equation}\vec{\psi_0} U = 0. \label{eq:psi0U} \end{equation}
\end{lemma}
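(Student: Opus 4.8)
The plan is to simply unfold the definition of the centering matrix $U = I - G$ and apply it to the all-ones row vector $\vec{\psi_0}$. First I would recall that $G$ is the averaging operator with $G_{ij} = 1/N$, so that computing $\vec{\psi_0} G$ amounts to summing a constant row against the columns of $G$: the $j$-th entry is $\sum_i (\psi_0)_i G_{ij} = \sum_i (1/N) = 1$, i.e. $\vec{\psi_0} G = \vec{\psi_0}$. Then $\vec{\psi_0} U = \vec{\psi_0}(I - G) = \vec{\psi_0} - \vec{\psi_0} G = \vec{\psi_0} - \vec{\psi_0} = 0$.

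An even shorter route, which I would present as the primary argument, invokes the already-stated defining property $\vec{x} U = \vec{x} - \bar{x}\,\vec{\psi_0}$ directly with $\vec{x} = \vec{\psi_0}$. Since the mean of the all-ones vector is $\bar{x} = \frac{1}{N}\sum_{i=1}^N 1 = 1$, this gives $\vec{\psi_0} U = \vec{\psi_0} - 1 \cdot \vec{\psi_0} = 0$, which is the claim. Both computations are self-contained given the definitions in this appendix; I would likely include the first as an explicit verification and remark that it is consistent with the second.

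There is no real obstacle here: the statement is an immediate consequence of $U$ being the projection onto the complement of $\operatorname{span}\{\vec{\psi_0}\}$, and the only thing to be careful about is the row-vector convention (so that $\vec{\psi_0}$ multiplies $U$ on the left, and $\vec{\psi_0} G$ rather than $G\vec{\psi_0}^T$ is the relevant product). This lemma will then be used together with $GU = UG = 0$ and $U^T = U$ in the subsequent lemmata, in particular to justify dropping $\vec{\psi_0}$-directed contributions when evaluating traces of the form $\mathrm{trace}(U^T A U)$.
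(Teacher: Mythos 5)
Your primary argument — applying $\vec{x}U = \vec{x} - \bar{x}\vec{\psi_0}$ with $\vec{x} = \vec{\psi_0}$ and $\bar{x}=1$ — is exactly the paper's proof, and it is correct. The additional explicit computation of $\vec{\psi_0}G = \vec{\psi_0}$ is a harmless, equivalent verification.
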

\begin{proof}$\vec{\psi_0} U = \vec{\psi_0} - \bar{\vec{\psi_0}}\vec{\psi_0} = 0$.
\end{proof}

\begin{lemma}
\label{lem:vUv}
For $\vec{v} \perp \vec{\psi_0}$, we have:
\begin{equation} \vec{v} U = \vec{v}. \label{eq:vUv}\end{equation}
\end{lemma}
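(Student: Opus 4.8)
The plan is to reduce the claim to the single observation that orthogonality to the all-ones vector $\vec{\psi_0} = [1,1,\ldots,1]$ is precisely the statement that the entries of $\vec{v}$ sum to zero. So first I would write $\vec{v}\cdot\vec{\psi_0} = \sum_{i=1}^{N} v_i = 0$ by hypothesis, which means the mean $\bar{v} = \frac{1}{N}\sum_{i=1}^{N} v_i$ vanishes.

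Then I would simply invoke the defining property of the centering matrix recorded at the start of this appendix, namely $\vec{v}U = \vec{v} - \bar{v}\,\vec{\psi_0}$, and substitute $\bar{v} = 0$ to conclude $\vec{v}U = \vec{v}$. Equivalently, one can argue directly from $U = I - G$: the row-vector product $\vec{v}G$ has $j$-th component $\sum_{i} v_i G_{ij} = \frac{1}{N}\sum_i v_i = 0$, so $\vec{v}G = 0$ and hence $\vec{v}U = \vec{v}(I - G) = \vec{v}$. Either route is a one-line computation.

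There is no real obstacle here — the only thing to keep track of is the convention, fixed earlier in the excerpt, that node-value vectors are row vectors and $U$ acts on them from the right. With that in mind, this lemma is essentially the complementary statement to \lemmaRef{psi0U}: $U$ annihilates the direction $\vec{\psi_0}$ and acts as the identity on its orthogonal complement, which is exactly what is needed when simplifying the bookended traces $\mathrm{trace}(U^T A U)$ that appear throughout the power-series expansions.
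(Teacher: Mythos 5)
Your proposal is correct and the second route you give ($\vec{v}\cdot\vec{\psi_0}=0$ implies $\vec{v}G=0$, hence $\vec{v}U=\vec{v}(I-G)=\vec{v}$) is exactly the paper's own proof. No issues.
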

\begin{proof}Since $\vec{v} \perp \vec{\psi_0}$, we have $\vec{v} \cdot \vec{\psi_0} =  0$ which implies:
$$\frac{1}{N} \sum_j v_j = 0.$$ This means nothing but:
$$\vec{v} G =  0, $$ and therefore: $$\vec{v} U =  \vec{v} (I-G) = \vec{v}.$$
\end{proof}

\begin{lemma}For a matrix $C$ with eigenvector $\vec{\psi_0}$ we have the following properties:
\begin{align}
U C U & = C U, \label{eq:CUequalsUCU} \\
U C^T U & = U C^T, \label{eq:UCTequalsUCTU} \\
C^m U & = (C U)^m, \; m \in \mathbb{N}, m \geq 1 \label{eq:CmUEqualsCUm} \\
U (C^m)^T & = ((C U)^m)^T, \; m \in \mathbb{N}, m \geq 1. \label{eq:UCmTEqualsCUmT}
\end{align}
\end{lemma}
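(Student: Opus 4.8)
The plan is to reduce all four identities to the rank-one structure of the averaging operator. Write the synchronized row vector as $\vec{\psi_0} = [1,\ldots,1]$, so that $G = \tfrac{1}{N}\vec{\psi_0}^T\vec{\psi_0}$ and $U = I - G$. Under the row-vector convention $\vec{x}C$ used throughout, ``$\vec{\psi_0}$ is an eigenvector of $C$'' means $\vec{\psi_0}C = \lambda_0\vec{\psi_0}$ for some scalar $\lambda_0$; equivalently every column of $C$ sums to $\lambda_0$, which immediately gives $GC = \lambda_0 G$, and transposing $\vec{\psi_0}C = \lambda_0\vec{\psi_0}$ gives $C^T\vec{\psi_0}^T = \lambda_0\vec{\psi_0}^T$ and hence $C^T G = \lambda_0 G$. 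I will also use the facts already in hand: $\vec{\psi_0}U = 0$ from \lemmaRef{psi0U}, $U^T = U$, and $GU = UG = 0$.

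For \eq{CUequalsUCU}, first I would expand $UCU = (I-G)CU = CU - GCU$ and note $GCU = \lambda_0 GU = 0$, so $UCU = CU$. For \eq{UCTequalsUCTU}, symmetrically expand $UC^TU = UC^T(I-G) = UC^T - UC^TG$ and note $UC^TG = \lambda_0 UG = 0$, so $UC^TU = UC^T$.

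For \eq{CmUEqualsCUm} I would induct on $m$, the case $m=1$ being vacuous. Assuming $C^mU = (CU)^m$, compute $(CU)^{m+1} = (C^mU)(CU) = C^m(UCU) = C^m(CU) = C^{m+1}U$, where the third equality is \eq{CUequalsUCU}. Finally \eq{UCmTEqualsCUmT} follows by transposing \eq{CmUEqualsCUm}: $((CU)^m)^T = (C^mU)^T = U^T(C^m)^T = U(C^m)^T$.

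The hard part is essentially nonexistent here: the content is just the observation that $GC = \lambda_0 G = C^T G$ together with $GU = UG = 0$. The only things to be careful about are (i) keeping the row-vector / left-eigenvector convention straight when writing $\vec{\psi_0}C = \lambda_0\vec{\psi_0}$ rather than $C\vec{\psi_0}^T = \lambda_0\vec{\psi_0}^T$ (only the former makes these identities true in general), and (ii) grouping the factors in the inductive step so that the block $UCU$ appears explicitly and \eq{CUequalsUCU} applies cleanly.
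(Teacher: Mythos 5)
Your proof is correct and follows essentially the same route as the paper: establish $GC = \lambda_0 G$ from the left-eigenvector relation $\vec{\psi_0}C = \lambda_0\vec{\psi_0}$, use $GU = UG = 0$ to get \eq{CUequalsUCU} and \eq{UCTequalsUCTU}, then obtain \eq{CmUEqualsCUm} by iterating/inducting with the block $UCU$ and \eq{UCmTEqualsCUmT} by transposition. The only difference is cosmetic (you run the induction upward from $m$ to $m+1$ while the paper peels factors off $C^mU$ downward), and your explicit care with the row-vector convention is a nice touch.
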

\begin{proof}
Since $G$ is composed of rows $1/N\vec{\psi_0}$, and $\vec{\psi_0}$ is an eigenvector of $C$ with eigenvalue say $\lambda_0$ it is easy to see that: $$GC = \lambda_0G.$$ Now we have: $$UCU  =  (C - GC)U = (C-\lambda_0 G)U = CU,$$ where we used that $GU = 0$.
This proves \eq{CUequalsUCU}, while \eq{UCTequalsUCTU} follows in a similar way by using $UG=0$ or directly by transposing \eq{CUequalsUCU}.

Next, we observe that: $$C^m U = C^{m-1} C U = C^{m-1} U C U,$$ where we used \eq{CUequalsUCU} in the last equality. Iterating this process $m-1$ times in total directly gives \eq{CmUEqualsCUm}. Again \eq{UCmTEqualsCUmT} follows in a similar way, or by directly transposing \eq{CmUEqualsCUm}. 
\end{proof}

\begin{lemma}
\label{lem:vU0}
If $\vec{v}U = 0$, then $\exists d, \vec{v} = d \vec{\psi_0}$.
\end{lemma}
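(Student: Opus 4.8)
The plan is to directly unfold the definition of the centering operator $U$. Recall from \app{traceInOrthogonalSpace} that $\vec{v}U = \vec{v} - \bar{v}\vec{\psi_0}$, where $\bar{v} = \frac{1}{N}\sum_j v_j$ is the mean of the components of $\vec{v}$. Hence the hypothesis $\vec{v}U = 0$ is literally the statement $\vec{v} = \bar{v}\vec{\psi_0}$, so one simply takes $d = \bar{v}$ and the lemma is proved.

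Equivalently, one can argue from $U = I - G$ with $G_{ij} = 1/N$: the equation $\vec{v}U = 0$ rearranges to $\vec{v} = \vec{v}G$. Computing the $j$-th component of the row vector $\vec{v}G$ gives $(\vec{v}G)_j = \sum_i v_i G_{ij} = \frac{1}{N}\sum_i v_i$, which is independent of $j$; that is, $\vec{v}G = \bar{v}\vec{\psi_0}$. Therefore $\vec{v} = \bar{v}\vec{\psi_0}$, and again $d = \bar{v}$ works.

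There is essentially no obstacle here: the result is immediate once the action of $U$ is written out, and no spectral or convergence arguments are needed. The only mild care required is to keep the row-vector convention straight — $U$ acts on the right, so that the averaging operator $G$ indeed produces the constant vector $\bar{v}\vec{\psi_0}$ rather than a transposed object. Geometrically, $U$ is the orthogonal projection onto the hyperplane perpendicular to $\vec{\psi_0}$, so its kernel is precisely the line $\mathrm{span}\{\vec{\psi_0}\}$, which is exactly the assertion of the lemma; this also makes the converse (that every $d\vec{\psi_0}$ satisfies $\vec{v}U = 0$) transparent via \lemmaRef{psi0U}.
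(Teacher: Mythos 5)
Your proof is correct and is essentially identical to the paper's: both unfold $\vec{v}U = \vec{v}(I-G) = 0$ to get $\vec{v} = \vec{v}G$, observe that every component of $\vec{v}G$ equals the mean $\frac{1}{N}\sum_i v_i$, and conclude $\vec{v} = \bar{v}\vec{\psi_0}$. Nothing further is needed.
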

\begin{proof}
If $\vec{v}U = 0$:
\begin{align}
	\vec{v}U & = \vec{v}(I - G) = 0, \\
	\therefore \vec{v} & = \vec{v}G, \\
	\therefore \forall i: \ v_i & = \frac{1}{N} \sum_i v_i , \\
	\mathrm{i.e.} \  \exists d, \ \vec{v} & = d \vec{\psi_0}. \label{eq:vUImpliesPsi0}
\end{align}
\end{proof}

\begin{lemma}
\label{lem:traceInOrthogonalBookend}
The trace of a given matrix $A$ after the centering projections are applied to produce $\mathrm{trace}(UAU)$ is given by:
\begin{equation} \mathrm{trace}(UAU) = \sum_i{A_{ii}} - \frac{1}{N} \sum_{i,j}{ A_{ij} }
\label{eq:traceInOrthogonalBookendAppendix}
\end{equation}
\end{lemma}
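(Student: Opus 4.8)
The plan is to collapse the two-sided sandwich $UAU$ to a one-sided product using idempotency of $U$ together with the cyclic invariance of the trace, and then to read off the answer entrywise. First I would recall that $U = I - G$ with $G$ the averaging operator, $G_{ij} = 1/N$, and verify that $G$ is idempotent: $(G^2)_{ij} = \sum_k G_{ik}G_{kj} = \sum_k 1/N^2 = 1/N = G_{ij}$. Hence $U^2 = (I-G)^2 = I - 2G + G^2 = I - G = U$, so $U$ is a projection (indeed the orthogonal projection onto the complement of $\vec{\psi_0}$, consistent with the properties of $U$ recorded in Lemmas~\ref{lem:psi0U} and \ref{lem:vUv}).

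Next I would use the cyclic property of the trace together with $U^2 = U$ to obtain $\mathrm{trace}(UAU) = \mathrm{trace}(UUA) = \mathrm{trace}(UA) = \mathrm{trace}(A) - \mathrm{trace}(GA)$. The first term is $\sum_i A_{ii}$ by the definition of the trace. For the second, working entrywise, $(GA)_{ii} = \sum_k G_{ik}A_{ki} = \tfrac{1}{N}\sum_k A_{ki}$, so that $\mathrm{trace}(GA) = \sum_i (GA)_{ii} = \tfrac{1}{N}\sum_{i,k} A_{ki} = \tfrac{1}{N}\sum_{i,j} A_{ij}$ after relabelling the dummy indices. Substituting these two evaluations yields the claimed identity.

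I do not expect any genuine obstacle here: the proof is a short computation. The only points requiring a little care are (i) the verification that $U$ is genuinely idempotent, since that is exactly what licenses discarding one factor of $U$ inside the trace, and (ii) bookkeeping of the summation indices in $\mathrm{trace}(GA)$, so that the \emph{full} double sum $\sum_{i,j} A_{ij}$ appears rather than a partial sum. As a cross-check one can instead expand $UAU = A - GA - AG + GAG$ and take the trace termwise; using $\mathrm{trace}(GA) = \mathrm{trace}(AG) = \mathrm{trace}(GAG) = \tfrac{1}{N}\sum_{i,j}A_{ij}$ (the last because $(GAG)_{ii} = \tfrac{1}{N^2}\sum_{k,l}A_{kl}$ for every $i$, contributing $N$ identical copies when summed over $i$), the three correction terms carry net sign $-1-1+1 = -1$ and the same result follows.
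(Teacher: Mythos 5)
Your proof is correct. The paper's own proof simply expands $UAU = A - AG - GA + GAG$ and evaluates the trace of each of the four terms, which is exactly the cross-check you describe at the end; your primary route instead first establishes idempotency of $U$ and invokes the cyclic property of the trace to collapse $\mathrm{trace}(UAU)$ to $\mathrm{trace}(UA) = \mathrm{trace}(A) - \mathrm{trace}(GA)$, leaving only two terms to evaluate rather than four. The two arguments are computationally equivalent, but yours makes explicit the structural reason the identity holds — $U$ is a projection, so one factor is redundant inside the trace — whereas the paper's version is a purely mechanical expansion; both the index bookkeeping in $\mathrm{trace}(GA) = \tfrac{1}{N}\sum_{i,j}A_{ij}$ and the observation that $\mathrm{trace}(GAG) = \tfrac{1}{N}\sum_{i,j}A_{ij}$ (from $N$ identical diagonal entries each equal to $\tfrac{1}{N^2}\sum_{k,l}A_{kl}$) match the paper's computation. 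No gaps.
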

\begin{proof}We have:
\begin{align}
 \mathrm{trace}(UAU) =  &\ \mathrm{trace}((I - G)A(I - G)), \nonumber \\
 				= &\ \mathrm{trace}(A - AG - GA + GAG), \nonumber \\
 				= &\ \sum_i{A_{ii}} - \frac{2}{N} \sum_{i,j}{ A_{ij} } \nonumber \\
 				  &\ \ \ + \sum_i{\left( \frac{1}{N^2} \sum_{k,l}{ A_{kl} } \right)}, \nonumber \\
 				= &\ \sum_i{A_{ii}} - \frac{1}{N} \sum_{i,j}{ A_{ij} } \nonumber
\end{align}
\end{proof}

\section{Spectral radius after centering projection}
\label{app:unaveragingZeroMode}

\begin{theorem}
\label{th:specRadiusReducedBelow1}
For a matrix $A$ with eigenvalues $|\lambda_A| < 1$, except for $\lambda_0 = 1$ corresponding to the zero-mode eigenvector $\vec{\psi_0}$, we have $|\lambda_{AU}| < 1$ for all eigenvalues of $A U$ (i.e. $\rho(A U) < 1$).
\end{theorem}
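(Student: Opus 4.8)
The plan is to exploit that the centering matrix $U$ is the orthogonal projection onto the hyperplane $W := \{\vec{v}\in\mathbb{C}^{N} : \vec{\psi_0}\vec{v}=0\}$, with kernel $\mathrm{span}(\vec{\psi_0}^{T})$, and to show that $AU$ acts on $W$ exactly as $A$ does while annihilating the complementary line $\mathrm{span}(\vec{\psi_0}^{T})$; consequently the eigenvalue $1$ of $A$ is ``replaced'' by $0$ in $AU$ and no other eigenvalue changes. First I would record the two projection facts I need: $U\vec{\psi_0}^{T}=0$ (the transpose of \eq{psi0U}) and $U\vec{v}=\vec{v}$ for every $\vec{v}\in W$ (the column form of \lemmaRef{vUv}, valid since $U=U^{T}$). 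Then, using that $\vec{\psi_0}$ is a left eigenvector with $\vec{\psi_0}A=\lambda_0\vec{\psi_0}$ and $\lambda_0=1$, I would check that $W$ is $A$-invariant: $\vec{\psi_0}\vec{v}=0$ implies $\vec{\psi_0}(A\vec{v})=(\vec{\psi_0}A)\vec{v}=\lambda_0(\vec{\psi_0}\vec{v})=0$.

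Next I would pass to an adapted orthonormal basis $\vec{u}_1,\dots,\vec{u}_{N-1}$ of $W$ together with $\vec{u}_N=\vec{\psi_0}^{T}/\sqrt{N}$, and let $Q=[\,\vec{u}_1\,|\cdots|\,\vec{u}_N\,]$ (orthogonal). Since $W$ is $A$-invariant, $Q^{T}AQ=\left(\begin{smallmatrix} B & * \\ 0 & c\end{smallmatrix}\right)$ is block upper triangular with $c=\vec{u}_N^{T}A\vec{u}_N=\tfrac{1}{N}\vec{\psi_0}A\vec{\psi_0}^{T}=\lambda_0=1$, so the characteristic polynomial of $A$ factors as that of the $(N-1)\times(N-1)$ block $B$ times $(x-1)$; hence the eigenvalues of $B$ are exactly those of $A$ with one copy of $1$ removed, and $\rho(B)<1$ by hypothesis. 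Performing the same computation for $AU$: for $j\le N-1$ one has $AU\vec{u}_j=A\vec{u}_j\in W$ (using $U\vec{u}_j=\vec{u}_j$), so the leading $(N-1)\times(N-1)$ block of $Q^{T}(AU)Q$ is again $B$, whereas $AU\vec{u}_N=\tfrac{1}{\sqrt N}AU\vec{\psi_0}^{T}=0$; therefore $Q^{T}(AU)Q=\left(\begin{smallmatrix} B & 0 \\ 0 & 0\end{smallmatrix}\right)$, and the eigenvalues of $AU$ are those of $B$ together with a single $0$. This gives $\rho(AU)=\max\{\rho(B),0\}=\rho(B)<1$, which is the claim.

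The one step that needs care is the bookkeeping of multiplicities implicit in the phrase ``all eigenvalues $|\lambda_A|<1$ except for $\lambda_0=1$'': I would read this as ``$1$ is an eigenvalue of algebraic multiplicity one, associated with $\vec{\psi_0}$, and every other eigenvalue of $A$ has modulus strictly below $1$'', which is precisely what makes $\rho(B)<1$ once the single copy of $1$ has been split off into the scalar block $c$. A related subtlety is orientation: one must use that $\vec{\psi_0}$ is a \emph{left} eigenvector of $A$ (consistent with the paper's row-vector convention and with $\vec{\psi_0}C=\lambda_0\vec{\psi_0}$ in the diffusive case), since this is what pins the trailing diagonal entry at $\lambda_0$ and forces the $W$-blocks of $A$ and $AU$ to coincide. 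Everything else reduces to the projection identities of $U$ collected in \app{traceInOrthogonalSpace} and the standard fact that the spectrum of a block-triangular matrix is the union of the spectra of its diagonal blocks.
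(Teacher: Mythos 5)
Your proof is correct, but it takes a genuinely different route from the paper's. The paper argues eigenvector-by-eigenvector: \lemmaRef{eigenvectorsOfUnaveragedProjection} shows every eigenvector of $AU$ with nonzero eigenvalue can be taken orthogonal to $\vec{\psi_0}$, and \lemmaRef{orthogonalEigenvectorsAfterProjection} then shows each such eigenvalue is also an eigenvalue of $A$ for some eigenvector other than $\vec{\psi_0}$; the delicate part there is ruling out $\lambda_w=\lambda_0=1$ when $\vec{w}A-\lambda_w\vec{w}=d\vec{\psi_0}$ with $d\neq 0$, which the paper handles by iterating to $\vec{w}A^n-\vec{w}=nd\vec{\psi_0}$ and invoking Gelfand's formula to reach a contradiction. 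Your block-triangularization with respect to the $A$-invariant hyperplane $W=\vec{\psi_0}^{\perp}$ sidesteps that contradiction argument entirely: conjugating by the adapted orthogonal basis exhibits $A$ as $\bigl(\begin{smallmatrix} B & * \\ 0 & 1\end{smallmatrix}\bigr)$ and $AU$ as $\bigl(\begin{smallmatrix} B & 0 \\ 0 & 0\end{smallmatrix}\bigr)$, so you obtain not just $\rho(AU)<1$ but the full spectral statement that $AU$ has exactly the spectrum of $A$ with the single eigenvalue $1$ replaced by $0$, counted with multiplicity --- a sharper fact that the paper in effect also needs later (it is asserted in \app{formForSymmetricC} when summing over $\lambda\neq\lambda_0$). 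Your reading of the hypothesis as ``$\lambda_0=1$ is a simple eigenvalue'' is the right one and is the same assumption the paper's Gelfand step implicitly relies on (a nontrivial Jordan block at $1$ would let $\|\vec{w}A^n\|$ grow linearly and weaken that contradiction), so this is not a gap relative to the paper; likewise your care about $\vec{\psi_0}$ being a \emph{left} eigenvector matches the paper's row-vector convention, and since left and right spectra coincide your similarity computation of $\mathrm{spec}(AU)$ applies verbatim to the operator $\vec{v}\mapsto\vec{v}AU$ used in the main text.
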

\begin{proof}
By \lemmaRef{eigenvectorsOfUnaveragedProjection} $AU$ only has eigenvalues $\lambda_{AU} = 0$ (including for eigenvector $\vec{\psi_0}$ by \lemmaRef{zeroModeAfterProjection}) or $\lambda_{AU} \neq 0$ for some eigenvectors $\vec{w}$ with $\vec{w} \cdot \vec{\psi_0} = 0$.

Then by \lemmaRef{orthogonalEigenvectorsAfterProjection} each eigenvector $\vec{w}$ of $AU$ with $\vec{w} \cdot \vec{\psi_0} = 0$ has an eigenvalue $\lambda_{AU}$ which is also an eigenvalue of $A$ for some eigenvector $\vec{v} \neq \vec{\psi_0}$, and so by the assumptions of the theorem $|\lambda_{AU}| < 1$.

Therefore, $\rho(A U) < 1$.
\end{proof}

\begin{lemma}
\label{lem:zeroModeAfterProjection}
$\vec{\psi_0} A = \lambda_0 \vec{\psi_0} \rightarrow \vec{\psi_0} AU = 0$.
\end{lemma}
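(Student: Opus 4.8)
The plan is to reduce the claim to the already-established identity $\vec{\psi_0} U = 0$ (\lemmaRef{psi0U}). Assume the hypothesis $\vec{\psi_0} A = \lambda_0 \vec{\psi_0}$, i.e.\ that $\vec{\psi_0}$ is a left eigenvector of $A$ with eigenvalue $\lambda_0$ (consistent with the row-vector convention used throughout the paper, under which $U$ acts from the right). First I would right-multiply both sides of this eigenvector equation by the centering matrix $U$; this is legitimate since $U$ is a fixed $N \times N$ matrix and matrix multiplication is associative.

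This turns the left-hand side into $\vec{\psi_0} A U$ and the right-hand side into $\lambda_0 \left( \vec{\psi_0} U \right)$. Next I would invoke \lemmaRef{psi0U}, which states $\vec{\psi_0} U = 0$, so the right-hand side vanishes identically. Hence $\vec{\psi_0} A U = 0$, which is exactly the asserted conclusion.

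There is essentially no obstacle here: the statement is a one-line consequence of \lemmaRef{psi0U} together with associativity of multiplication, and the only point requiring (trivial) care is keeping the zero-mode on the left of $A$ and $U$ so that $U$ acts on $\vec{\psi_0}$ from the right exactly as in the statement of \lemmaRef{psi0U}. This lemma then feeds directly into the proof of \theoremRef{specRadiusReducedBelow1}, where it is used to certify that $\vec{\psi_0}$ is an eigenvector of $A U$ with eigenvalue $0$.
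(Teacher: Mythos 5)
Your proof is correct and is essentially identical to the paper's: both right-multiply the eigenvector relation by $U$ and then apply \lemmaRef{psi0U} to conclude $\vec{\psi_0} A U = \lambda_0 \vec{\psi_0} U = 0$. No gaps.
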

\begin{proof}
\begin{align}
	\vec{\psi_0} A U & = \lambda_0 \vec{\psi_0} U, \\
					& = 0,
\end{align}
via \lemmaRef{psi0U}.
\end{proof}

\begin{lemma}
\label{lem:eigenvectorsOfUnaveragedProjection}
$\vec{\psi_0} A = \lambda_0 \vec{\psi_0}$, $\vec{v} A U = \lambda_v \vec{v}$ $\rightarrow$  $\lambda_v = 0$ or else $\lambda_v \neq 0$ with $\vec{v} \cdot \vec{\psi_0} = 0$.
\end{lemma}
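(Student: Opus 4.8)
The plan is to extract everything from the single algebraic fact $UG = 0$ recorded in \app{traceInOrthogonalSpace} (together with $U = I - G$). Working with the row-vector convention, I would start from the hypothesised eigenrelation $\vec{v} A U = \lambda_v \vec{v}$ and right-multiply both sides by the averaging operator $G$. On the left this produces $\vec{v} A (U G) = \vec{v} A \cdot 0 = 0$, while on the right it gives $\lambda_v \vec{v} G$. Hence $\lambda_v \vec{v} G = 0$.

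Next I would compute $\vec{v} G$ explicitly: since every entry of $G$ equals $1/N$, we get $\vec{v} G = \bigl(\tfrac{1}{N}\sum_j v_j\bigr)\vec{\psi_0}$, a scalar multiple of $\vec{\psi_0}$, which vanishes if and only if $\sum_j v_j = 0$, i.e. if and only if $\vec{v}\cdot\vec{\psi_0} = 0$. Combining with $\lambda_v \vec{v} G = 0$, there are exactly two possibilities: either $\lambda_v = 0$, or $\lambda_v \neq 0$ and then necessarily $\vec{v} G = 0$, i.e. $\vec{v}\cdot\vec{\psi_0} = 0$. This is precisely the stated dichotomy. (One could equivalently phrase the last step via \lemmaRef{vU0}, noting $\vec{v}G = 0 \Leftrightarrow \vec{v}U = \vec{v}$, but the direct computation is cleanest.)

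I do not anticipate a genuine obstacle here; the statement is essentially immediate once one spots that multiplying by $G$ annihilates the $U$ on the right. The only points requiring a little care are bookkeeping ones: respecting the row-vector convention so that one right-multiplies (not left-multiplies) by $G$, and invoking $UG = 0$ from the appendix definitions. It is worth remarking that the hypothesis $\vec{\psi_0} A = \lambda_0\vec{\psi_0}$ is not actually used in this particular lemma — it is carried along only for consistency with the neighbouring lemmata and with \theoremRef{specRadiusReducedBelow1}, which consumes this result.
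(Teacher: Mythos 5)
Your proof is correct, and it is genuinely different from (and shorter than) the one in the paper. The paper decomposes $\vec{v} = a\vec{w} + b\vec{\psi_0}$ with $\vec{w}\perp\vec{\psi_0}$, invokes \eq{CUequalsUCU} to replace $\vec{v}AU$ by $\vec{v}UAU$, and after several substitutions arrives at $\lambda_v\vec{v} = a\lambda_v\vec{w}$, from which the dichotomy follows. Your route — right-multiplying the eigenrelation by $G$ and using $UG=0$ to get $\lambda_v\vec{v}G = 0$, then identifying $\vec{v}G = \bigl(\tfrac{1}{N}\sum_j v_j\bigr)\vec{\psi_0}$ — reaches the same conclusion in one step and with strictly weaker hypotheses: you are right that the assumption $\vec{\psi_0}A = \lambda_0\vec{\psi_0}$ is never used in your argument, whereas the paper's proof does consume it (the identity $UAU = AU$ of \eq{CUequalsUCU} requires $\vec{\psi_0}$ to be an eigenvector of $A$). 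So your version establishes the slightly more general statement that \emph{any} eigenvector of $AU$ with nonzero eigenvalue is orthogonal to $\vec{\psi_0}$, for arbitrary $A$. The only thing the paper's longer computation buys in addition is the intermediate relation $a\vec{w}AU = a\lambda_v\vec{w}$ for the orthogonal component, which is not needed for the lemma as stated nor for its use in \theoremRef{specRadiusReducedBelow1}. Your bookkeeping on the row-vector convention is also right: one must right-multiply by $G$, and $UG = (I-G)G = G - G^2 = 0$ since $G$ is idempotent.
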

\begin{proof}
Consider a general eigenvector $\vec{v}= a \vec{w} + b\vec{\psi_0}$ of $AU$ with eigenvalue $\lambda_v$ for some $\vec{w} \perp \vec{\psi_0}$, and scalars $a$ and $b$.
By definition $\vec{v}AU = \lambda_v \vec{v}$, so via \eq{CUequalsUCU}:
\begin{align}
\vec{v}UAU = \vec{v}AU = \lambda_v \vec{v}.
\label{eq:eigenOfAU1}
\end{align}
Now, via \lemmaRef{psi0U}:
\begin{align}
\vec{v}UAU & = (a \vec{w} + b\vec{\psi_0})UAU = a \vec{w} AU,
\label{eq:eigenOfAU2}
\end{align}
but also:
\begin{align}
\lambda_v \vec{v} & = a \lambda_v \vec{w} + b \lambda_v \vec{\psi_0},
\label{eq:eigenOfAU3}
\end{align}
so substituting \eq{eigenOfAU2} and \eq{eigenOfAU3} into \eq{eigenOfAU1} we have:
\begin{align}
a \vec{w} AU & = a \lambda_v \vec{w} + b \lambda_v \vec{\psi_0}, \\
(\times U:)\ \  a \vec{w} AUU & = a \lambda_v \vec{w}U + b \lambda_v \vec{\psi_0}U, \\
a \vec{w} AU & = a \lambda_v \vec{w},
\label{eq:eigenOfAU6}
\end{align}
since $U$ is idempotent and via \lemmaRef{psi0U} and \lemmaRef{vUv}.

From \eq{eigenOfAU1} and \eq{eigenOfAU2} we have $a \vec{w} AU = \lambda_v \vec{v}$ and then substituting into \eq{eigenOfAU6} we have:
\begin{align}
\lambda_v \vec{v} & = a \lambda_v \vec{w}.
\end{align}
So either $\lambda_v = 0$, or $\lambda_v \neq 0$ with $\vec{v} \perp \vec{\psi_0}$ (since $\vec{w} \perp \vec{\psi_0}$).
(Note if $a=0$, then $\vec{v}=0$ and we still have $\vec{v} \cdot \vec{\psi_0}= 0$.)
\end{proof}

\begin{lemma}
\label{lem:orthogonalEigenvectorsAfterProjection}
For a matrix $A$ with eigenvalues $|\lambda_A| < 1$, except for $\lambda_0 = 1$ corresponding to the zero-mode eigenvector $\vec{\psi_0}$, if $\vec{w}$ with $\vec{w} \cdot \vec{\psi_0} = 0$ is an eigenvector of $AU$ with eigenvalue $\lambda_w$, then $\lambda_w$ is also an eigenvalue of $A$ for some eigenvector $\vec{v} \neq \vec{\psi_0}$.
\end{lemma}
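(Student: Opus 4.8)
The plan is to exhibit the desired eigenvector of $A$ as a correction of $\vec{w}$ by a scalar multiple of $\vec{\psi_0}$. First I would unpack the hypothesis: writing $U = I - G$ and using that the rows of $G$ are $\tfrac1N\vec{\psi_0}$, one gets $\vec{w}AG = \tfrac{s}{N}\vec{\psi_0}$ with $s := (\vec{w}A)\cdot\vec{\psi_0}$, so the eigenvector equation $\vec{w}AU = \lambda_w\vec{w}$ is equivalent to
\begin{equation}
	\vec{w}A = \lambda_w\vec{w} + \tfrac{s}{N}\vec{\psi_0}.
\end{equation}
In particular $A$ maps $\vec{w}$ into $\mathrm{span}\{\vec{w},\vec{\psi_0}\}$ and maps $\vec{\psi_0}$ to $\lambda_0\vec{\psi_0}$, so $A$ leaves this (at most two-dimensional) space invariant; the whole argument then reduces to a $2\times 2$ linear-algebra question on that space.

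Next, assuming $\lambda_w \neq \lambda_0 = 1$, I would set
\begin{equation}
	\vec{v} := \vec{w} + \frac{s}{N(\lambda_w - \lambda_0)}\,\vec{\psi_0}
\end{equation}
and verify directly, using $\vec{\psi_0}A = \lambda_0\vec{\psi_0}$, that $\vec{v}A = \lambda_w\vec{v}$. Since $\vec{w}\neq 0$ is orthogonal to $\vec{\psi_0}$, the vector $\vec{v}$ has a nonzero component orthogonal to $\vec{\psi_0}$, hence is not a scalar multiple of $\vec{\psi_0}$; in particular $\vec{v}\neq 0$ and $\vec{v}\neq\vec{\psi_0}$, so $\lambda_w$ is an eigenvalue of $A$ for an eigenvector $\neq\vec{\psi_0}$, as claimed. (This one construction also covers $\lambda_w = 0$, where by \lemmaRef{vU0} one has $\vec{w}A = d\vec{\psi_0}$ and the formula just subtracts off that $\vec{\psi_0}$-component.)

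The hard part will be disposing of the excluded case $\lambda_w = \lambda_0 = 1$, where the shift above is unavailable. Here I would lean on the standing hypothesis that $1$ occurs as an eigenvalue of $A$ only for $\vec{\psi_0}$ — i.e.\ read it as saying $\lambda_0 = 1$ is a \emph{simple} eigenvalue, the usual diffusive-coupling setup. If $s = 0$ the displayed identity gives $\vec{w}A = \vec{w}$ with $\vec{w}\neq\vec{\psi_0}$, contradicting simplicity; if $s\neq 0$, then on $\mathrm{span}\{\vec{w},\vec{\psi_0}\}$ the map $A$ acts by $\vec{w}\mapsto\vec{w}+\tfrac{s}{N}\vec{\psi_0}$, $\vec{\psi_0}\mapsto\vec{\psi_0}$, a single $2\times 2$ Jordan block for the eigenvalue $1$, forcing algebraic multiplicity at least two and again contradicting the hypothesis. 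Hence $\lambda_w\neq 1$ always, and the construction of the previous paragraph applies. I expect the only genuine subtlety is pinning down exactly what ``$|\lambda_A|<1$ except for $\lambda_0 = 1$ corresponding to $\vec{\psi_0}$'' is meant to guarantee about the multiplicity of the eigenvalue $1$; once that is fixed, everything else is a short verification.
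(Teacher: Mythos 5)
Your proposal is correct, and for the main construction it is essentially the paper's argument in different clothes: your identity $\vec{w}A = \lambda_w\vec{w} + \tfrac{s}{N}\vec{\psi_0}$ is exactly the paper's $\vec{w}A - \lambda_w\vec{w} = d\vec{\psi_0}$ (with $d = s/N$, obtained there via \lemmaRef{vU0}), and your corrected vector $\vec{v} = \vec{w} + \tfrac{s}{N(\lambda_w-\lambda_0)}\vec{\psi_0}$ is a scalar multiple of the paper's $\vec{v} = \tfrac{\lambda_w-\lambda_0}{d}\vec{w} + \vec{\psi_0}$, with the same verification by substitution and the same observation that it is not proportional to $\vec{\psi_0}$. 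Where you genuinely diverge is in excluding the case $\lambda_w = \lambda_0 = 1$ with $s \neq 0$. The paper iterates the relation to get $\vec{w}A^n - \vec{w} = nd\vec{\psi_0}$ and derives a contradiction from the divergence of the right-hand side against the boundedness of $\vec{w}A^n$, invoking Gelfand's theorem ($\|A^n\|$ scaling as $\rho(A)^n = 1$); your argument instead stays on the two-dimensional invariant subspace $\mathrm{span}\{\vec{w},\vec{\psi_0}\}$ and observes that $s\neq 0$ would force a nontrivial Jordan block at eigenvalue $1$, contradicting the hypothesis that $1$ occurs only for $\vec{\psi_0}$. These are two faces of the same obstruction — a rank-two generalized eigenvector at $1$ is precisely what makes $\vec{w}A^n$ grow linearly — but your finite-dimensional version is more elementary and, importantly, makes explicit the interpretive point that both proofs secretly rely on: Gelfand's theorem alone only gives $\|A^n\|^{1/n}\to 1$, which is compatible with polynomial growth of $\|A^n\|$, so the paper's "the LHS does not diverge" step also needs the eigenvalue $1$ to be simple (equivalently, semisimple with one-dimensional eigenspace). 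Your reading of the hypothesis is the one under which both arguments close, so flagging it is a virtue rather than a gap.
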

\begin{proof}
We have:
\begin{align}
	\vec{w} A U & = \lambda_w \vec{w} = \lambda_w \vec{w} U,
\end{align}
via \lemmaRef{vUv}, so:
\begin{align}
	(\vec{w} A - \lambda_w \vec{w}) U & = 0.
\end{align}
Now, this implies via \lemmaRef{vU0}:
\begin{align}
	(\vec{w} A - \lambda_w \vec{w}) & = d \vec{\psi_0},
	\label{eq:conditionOnEigenvector}
\end{align}
for some scalar $d$.

If $d=0$, $\vec{w} A = \lambda_w \vec{w}$ and $\lambda_w$ is an eigenvalue of $A$ with eigenvector $\vec{w} \neq \vec{\psi_0}$ (since $\vec{w} \perp \vec{\psi_0}$), as required.

If $d \neq 0$, then $\vec{v} = \frac{\lambda_w - \lambda_0}{d} \vec{w} + \vec{\psi_0}$ is an eigenvector of $A$ with eigenvalue $\lambda_w$ (proof by substitution). 
Here, if $\lambda_w \neq \lambda_0$ then $\vec{v} \neq \vec{\psi_0}$, as required.
We can then show that if $\lambda_w = \lambda_0$ (in which case both equal 1, and we would have $\vec{v} = \vec{\psi_0}$) cannot occur since it leads to a contradiction, as follows.
\eq{conditionOnEigenvector} with $\lambda_w = 1$ gives:
\begin{align}
	\vec{w} A - \vec{w} & = d \psi_0, \\
(\times A:)\ \ 	\vec{w} A^2 - \vec{w} A & = d \psi_0,
\end{align}
and summing these two equations gives:
\begin{align}
\vec{w} A^2 - \vec{w} & = 2 d \psi_0.
\end{align}
With further iterations we have:
\begin{align}
\vec{w} A^n - \vec{w} & = n d \psi_0,
\end{align}
for integers $n \geq 1$.
Now, taking the limit as $n \rightarrow \infty$, the RHS clearly diverges for any $d \neq 0$, whereas the LHS does not (via Gelfand's thereom \cite{horn13}, the norm of $A^n$ scales as $\rho(A)^n=1$ as $n \rightarrow \infty$).
This contradiction implies that we cannot have $d \neq 0$ with $\lambda_w = \lambda_0 = 1$ here.

As such, we have shown that $\lambda_w$ is also an eigenvalue of $A$ for some eigenvector $\vec{v} \neq \vec{\psi_0}$ here.
\end{proof}

\section{Centering projected covariance matrix in presence of zero-mode eigenvalue}
\label{app:convergenceUTOmegaU}

Here we demonstrate how to write a convergent form for $\Omega_U = U^T \Omega U$ when the zero-mode $\vec{\psi_0}$ is an eigenvector of $C$ with eigenvalue $\lambda_0=1$.

\subsection{Continuous-time case}
\label{app:convergenceUTOmegaUContinuous}

Modifying the derivation by Barnett et al. \cite{barn09b} for $\Omega$ for the continuous-time process in \eq{ornsteinUhlenbeck}, we first approximate \eq{ornsteinUhlenbeck} by the discrete-time process:
\begin{align}
	\vec{x}(t + dt) = \vec{x}(t) [I - \theta (I - C) dt] + \zeta \vec{r}(t) \sqrt{dt}
	\label{eq:discreteApproxToContinuous},
\end{align}
where $\vec{r}(t)$ is uncorrelated mean-zero unit-variance Gaussian noise.
We then right multiply by $U$, along with the substitution $K=I-\theta (I-C)dt$:
\begin{align}
	\vec{x}(t + dt) U = \vec{x}(t) K U + \zeta \vec{r}(t) U \sqrt{dt}
	\label{eq:discreteApproxToContinuousProjected},
\end{align}
and then, since $K$ also has $\vec{\psi_0}$ as an eigenvector with eigenvalue 1, we use \eq{CUequalsUCU} to restate:
\begin{align}
	\vec{x}(t + dt) U = \vec{x}(t) U K U + \zeta \vec{r}(t) U \sqrt{dt}
	\label{eq:covarianceProjectedContinuousUnaveraged},
\end{align}

We left multiply \eq{covarianceProjectedContinuousUnaveraged} by its transpose, and average over the ensemble at a given time $t$ to obtain:
\begin{align}
	U^T \overline{\vec{x}^T(t + dt) \vec{x}(t + dt)} U = & U^T K^T U^T \overline{\vec{x}^T(t) \vec{x}(t)} U K U \nonumber \\ & + \zeta^2 U^T U dt
	\label{eq:UOmegaUEnsembleAverage},
\end{align}
since $\overline{\vec{r}^T(t)\vec{r}(t)} = I$ for all $t$, and all cross-terms vanish because $\vec{r}(t)$ is uncorrelated with $\vec{x}(t)$.

Extending \cite{barn09b}, we require stationarity of $\vec{x}(t)U$ such that $\Omega_U = U^T \overline{\vec{x}^T(t) \vec{x}(t)} U = U^T \overline{\vec{x}^T(t + 1) \vec{x}(t + 1)} U$.
Stationarity of $\vec{x}(t)U$ in \eq{discreteApproxToContinuousProjected} for fixed $dt$ requires $\rho(KU) < 1$.
This is satisfied where we have $|\lambda_{K}| < 1$ for all eigenvalues $\lambda_{K}$ of $K$ except that corresponding to $\vec{\psi_0}$, via \theoremRef{specRadiusReducedBelow1} in \app{unaveragingZeroMode}.
Now, the eigenvalues of $K$ are $\lambda_K = 1 - \theta (1 - \lambda_C)dt$ with $\lambda_C$ the eigenvalues of $C$. As such, we need $| 1 - \theta (1 - \lambda_C)dt | < 1$ which in the continuous limit gives the stationarity condition $\mathrm{Re}(\lambda_C) < 1$, for all eigenvalues $\lambda_C$ of $C$ except that corresponding to $\vec{\psi_0}$.

Then we have the following expression (from \eq{UOmegaUEnsembleAverage}):
\begin{align}
	\Omega_U = U K^T \Omega_U K U + \zeta^2 U dt
	\label{eq:bracketedCovarianceRecurrenceExpressionContinuous}.
\end{align}
We then substitute $K=I-\theta(I-C)dt$ back in:
\begin{align}
	\Omega_U = &\ U (I-\theta(I-C^T)dt) \Omega_U (I-\theta(I-C)dt) U + \zeta^2 U dt, \nonumber \\
			= &\ U \Omega_U U - U (I-C^T) \Omega_U U \theta dt - U \Omega_U (I - C) U \theta dt \nonumber \\
			  &\ + \zeta^2 U dt + O(dt^2), \nonumber \\
			= &\ \Omega_U - U (I-C^T) \Omega_U \theta dt - \Omega_U (I - C) U \theta dt \nonumber \\
			  &\ + \zeta^2 U dt + O(dt^2), \nonumber \\
		0  = &\ - \Omega_U dt + U C^T \Omega_U dt - \Omega_U dt + \Omega_U C U dt \nonumber \\
		      &\ + \frac{\zeta^2}{\theta} U dt + O(dt^2), \nonumber 
\end{align}
and consider terms at highest order $O(dt)$ to find that in the continuous limit $dt \rightarrow 0$:
\begin{align}
	2 \Omega_U = &\ \frac{\zeta^2}{\theta} U + U C^T \Omega_U + \Omega_U C U
	\label{eq:bracketedCovarianceRecurrenceExpressionContinuous2}, \\
			= &\ \frac{\zeta^2}{\theta} U + (C U)^T \Omega_U + \Omega_U C U
	\label{eq:bracketedCovarianceRecurrenceExpressionContinuous3}.
\end{align}

In the special case when $C$ is symmetric (and with $CU$ having no eigenvalue equal to 1), we can solve for $\Omega_U$ to obtain:
\begin{align}
	\Omega_U = \frac{\zeta^2}{2\theta} \frac{U}{I - CU}
	\label{eq:bracketedCovarianceSymmetricContinuous}.
\end{align}
Otherwise, we obtain the following power series solution:
\begin{align}
	2 \Omega_U = & \ \frac{\zeta^2}{\theta} U + \frac{\zeta^2}{2\theta} ((CU)^T U + U (C U)) + \nonumber \\
	 & \frac{\zeta^2}{4\theta} \left[ ((CU)^T)^2 U + 2(CU)^T U (CU) + U (CU)^2 \right] + \ldots
	, \nonumber \\
	 = & \frac{\zeta^2}{\theta} \sum_{m=0}^{\infty}{ 2^{-m} \sum_{u=0}^{m}{\binom{m}{u} ((CU)^u)^T U (CU)^{m-u}} }
	\label{eq:bracketedCovarianceGeneralContinuous},
\end{align}
insofar as it converges.
Then, we simplify via \eqs{CmUEqualsCUm}{UCmTEqualsCUmT}, and then \eq{CUequalsUCU}:
\begin{align}
	2 \Omega_U = & \frac{\zeta^2}{\theta} \sum_{m=0}^{\infty}{ 2^{-m} \sum_{u=0}^{m}{\binom{m}{u} U (C^u)^T U C^{m-u}} U}
	\label{eq:bracketedCovarianceGeneralContinuous2}, \\
			= & \frac{\zeta^2}{\theta} \sum_{m=0}^{\infty}{ 2^{-m} \sum_{u=0}^{m}{\binom{m}{u} U (C^u)^T C^{m-u}} U}
	\label{eq:bracketedCovarianceGeneralContinuousFinal}.
\end{align}

Finally, as per \cite{barn09b}, the above stationarity condition does not guarantee convergence of \eqs{bracketedCovarianceGeneralContinuous}{bracketedCovarianceGeneralContinuousFinal}. 
As such, we briefly demonstrate that a sufficient condition for convergence is $| \lambda_C | < 1$, for all eigenvalues $\lambda_C$ of $C$ except that corresponding to $\vec{\psi_0}$ (which implies the stationarity condition).
For any matrix norm $|| \cdot ||$ \cite{horn13}
applied to \eq{bracketedCovarianceGeneralContinuousFinal}, we have:
\begin{align}
	2 || \Omega_U || & \leq \frac{\zeta^2}{\theta} \sum_{m=0}^{\infty}{ 2^{-m} \sum_{u=0}^{m}{\binom{m}{u} || ((C U)^u)^T || \ || (C U)^{m-u} || } }
	\label{eq:bracketedCovarianceGeneralContinuousNorms},
\end{align}
using \eqs{CmUEqualsCUm}{UCmTEqualsCUmT}, that the product of norms is greater than the norm of the products, and that the sum of norms is greater than the norm of the sum.
Similarly, we have:
\begin{align}
	2 || \Omega_U || & \leq \frac{\zeta^2}{\theta} \sum_{m=0}^{\infty}{ 2^{-m} \sum_{u=0}^{m}{\binom{m}{u} || (C U)^T ||^u \ || C U ||^{m-u} } }
	\label{eq:bracketedCovarianceGeneralContinuousNorms2}.
\end{align}
It is well known that  for all $\epsilon > 0$ there exists a matrix norm $|| \cdot ||$ such that $|| A || \leq \rho(A) + \epsilon$ \cite[Lemma 5.6.10]{horn13}.
Noting $\rho( (CU)^T ) = \rho( CU )$, we observe that for any $\epsilon > 0$ there exists a matrix norm such that:
\begin{align}
	2 || \Omega_U || \leq \frac{\zeta^2}{\theta} \sum_{m=0}^{\infty}{ ( \rho(C U) + \epsilon)^{m} }
	\label{eq:bracketedCovarianceGeneralContinuousNorms3}.
\end{align}
Then, if we have $\rho(C U) < 1$ and choose $\epsilon$ such that $(\rho(C U) + \epsilon)$ remains $< 1$, then \eq{bracketedCovarianceGeneralContinuousNorms3} converges, leaving $|| \Omega_U ||$ finite.
As such, convergence of this sum of norms then implies convergence of the matrix sum for $\Omega_U$ \cite{horn13},
under the condition $\rho(C U) < 1$.
This is satisfied where $|\lambda_{C}| < 1$ for all eigenvalues $\lambda_{C}$ of $C$ except that corresponding to $\vec{\psi_0}$, via \theoremRef{specRadiusReducedBelow1} in \app{unaveragingZeroMode}.

\subsection{Discrete-time case}
\label{app:convergenceUTOmegaUDiscrete}

Using a parallel analysis for the discrete process \eq{discreteARProcess}, we right multiply \eq{discreteARProcess} by $U$:
\begin{align}
	\vec{x}(t + 1) U = \vec{x}(t) C U + \zeta \vec{r}(t) U
	\label{eq:covarianceProjectedDiscrete},
\end{align}
and use \eq{CUequalsUCU} to restate:
\begin{align}
	\vec{x}(t + 1) U = \vec{x}(t) U C U + \zeta \vec{r}(t) U
	\label{eq:covarianceProjectedDiscreteUnaveraged}.
\end{align}

We then left multiply \eq{covarianceProjectedDiscreteUnaveraged} by its transpose, and average over the ensemble at a given time $t$ to obtain:
\begin{align}
	U^T \overline{\vec{x}^T(t + 1) \vec{x}(t + 1)} U = U ^T C^T U^T \overline{\vec{x}^T(t) \vec{x}(t)} U C U + \zeta^2 U^T U
	\label{eq:UOmegaUEnsembleDiscreteAverage},
\end{align}
since $\overline{\vec{r}^T(t)\vec{r}(t)} = I$ for all $t$, and all cross-terms vanish because $\vec{r}(t)$ is uncorrelated with $\vec{x}(t)$.
As above, we require stationarity of $\vec{x}(t) U$ such that $\Omega_U = U^T \overline{\vec{x}^T(t) \vec{x}(t)} U = U^T \overline{\vec{x}^T(t + 1) \vec{x}(t + 1)} U$.
For stationarity of $\vec{x}(t) U$ in \eq{covarianceProjectedDiscreteUnaveraged} we need $\rho(C U) < 1$, which is met when we have $|\lambda_{C}| < 1$ for all eigenvalues of $C$ except that corresponding to $\vec{\psi_0}$ (via \theoremRef{specRadiusReducedBelow1} in \app{unaveragingZeroMode}).

Then we have the following expression for $\Omega_U = U^T \Omega U$:
\begin{align}
	\Omega_U = &\ U C^T \Omega_U C U + \zeta^2 U
	\label{eq:bracketedCovarianceRecurrenceExpressionDiscrete}, \\
			= &\ (C U)^T \Omega_U C U + \zeta^2 U
	\label{eq:bracketedCovarianceRecurrenceExpressionDiscrete2}.	
\end{align}
In the special case when $C$ is symmetric (and since we have $\rho(C U) < 1$ from the stationarity condition above), we can solve for $\Omega_U$ to obtain:
\begin{align}
	\Omega_U =  \frac{\zeta^2 U}{I - (CU)^2}
	\label{eq:bracketedCovarianceSymmetricDiscrete}.
\end{align}
Otherwise, we then obtain the following power series solution:
\begin{align}
	\Omega_U = &\ \zeta^2 \left[ U + (C U)^T U (C U) + ((C U)^2)^T U (C U)^2 + \ldots \right]
	, \nonumber \\
	 = &\ \zeta^2 \sum_{u=0}^{\infty}{((C U)^u)^T U (C U)^{u}}
	\label{eq:bracketedCovarianceGeneralDiscrete},
\end{align}
insofar as it converges.
We again simplify via \eqs{CmUEqualsCUm}{UCmTEqualsCUmT}, and \eq{CUequalsUCU} to obtain:
\begin{align}
	\Omega_U = &\ \zeta^2 \sum_{u=0}^{\infty}{U (C^u)^T C^{u} U}
	\label{eq:bracketedCovarianceGeneralDiscreteFinal}.
\end{align}

Unlike the continuous case, the stationarity condition here does indeed guarantee convergence of \eqs{bracketedCovarianceGeneralDiscrete}
{bracketedCovarianceGeneralDiscreteFinal}.
This is demonstrated via a similar argument with matrix norms as was used for the continuous case.

\section{$\left\langle \sigma^2 \right\rangle$ for discrete time}
\label{app:discreteExpansion}

Expanding \eq{synchronizabilityDiscrete} in a similar way to the continuous process in \fullSubSecRef{motifRelationships}{synchronizabilityMotifsContinuous} via \eq{traceInOrthogonalBookend} we obtain:
\begin{align}
	\left\langle \sigma^2 \right\rangle & = \frac{\zeta^2}{N} \sum_{u=0}^{\infty}{\left( \sum_{i}{ ((C^u)^T C^{u})_{ii}} - \frac{1}{N}\sum_{i,j}{ ((C^u)^T C^{u})_{ij}} \right)}, \nonumber
\end{align}
so:
\begin{align}
	\left\langle \sigma^2 \right\rangle & = \frac{\zeta^2}{N} \sum_{u=0}^{\infty}{\left( \sum_{i,k}{ C^u_{ki} C^{u}_{ki}} - \frac{1}{N}\sum_{i,j,k}{ C^u_{ki} C^{u}_{kj}} \right)}, \nonumber \\
	& = \frac{\zeta^2}{N} \sum_{u=0}^{\infty}{\left( \sum_{i,k}{ \walkprod{k}{i}{u}{k}{i}{u}	} - \frac{1}{N}\sum_{i,j,k}{ \walkprod{k}{i}{u}{k}{j}{u}	} \right)}, \nonumber \\
	& = \zeta^2(1-\frac{1}{N} ) + \frac{\zeta^2}{N} \sum_{u=1}^{\infty}{ \sum_{i,k}{ \left( \walkprod{k}{i}{u}{k}{i}{u} - \frac{1}{N} \sum_{j}{ \walkprod{k}{i}{u}{k}{j}{u} }  \right) } }
	\label{eq:synchronizabilityMotifsDiscreteAppendix}.
\end{align}
using \eq{weightedDualWalkMotifCounts}.

\section{Proof of convergent forms for symmetric $C$}
\label{app:formForSymmetricC}

Starting from \eq{synchronizabilityContinuous} for continuous time, with $\theta=\zeta=1$ and symmetric $C = C^T$ we have:
\begin{align}
	\left\langle \sigma^2 \right\rangle & = \sum_{m=0}^{\infty} \frac{2^{-m-1}}{N} \sum_{u=0}^{m} \binom{m}{u} \mathrm{trace}\left( U C^m U \right),
	\nonumber \\
	& = \sum_{m=0}^{\infty} \frac{1}{2N} \mathrm{trace}\left( U C^m U \right), \nonumber \\
	& = \frac{1}{2N} \sum_{m=0}^{\infty} \mathrm{trace}\left( (C U)^m \right), \nonumber
\end{align}
via \eq{CUequalsUCU} and \eq{CmUEqualsCUm}.
Then, since:
\begin{itemize}
\item the trace of a matrix is equal to the sum of its eigenvalues; 
\item the eigenspectrum $\lambda_{CU}$ of $C U$ is the same as that of $C$, $\lambda_C$, except $\lambda_0$ for $\psi_0$ if $\psi_0$ is an eigenvector, which will have $\lambda_0 \rightarrow 0$ for $CU$ (see \lemmaRef{zeroModeAfterProjection} and \lemmaRef{orthogonalEigenvectorsAfterProjection} in \app{unaveragingZeroMode}); and 
\item the eigenvalues of $C^m$ are those of $C$ raised to the power $m$,
\end{itemize}
we have:
\begin{align}
	\left\langle \sigma^2 \right\rangle & = \frac{1}{2N} \sum_{m=0}^{\infty} \sum_{\lambda \neq \lambda_0} { \lambda_C^m }, \nonumber \\
	& =  \frac{1}{2N} \sum_{\lambda \neq \lambda_0} \sum_{m=0}^{\infty} { \lambda_C^m }, \nonumber \\
	& =  \frac{1}{2N} \sum_{\lambda \neq \lambda_0} \frac{1}{ 1 - \lambda_C },
	\label{eq:symmetricCcontinuous} 
\end{align}
with the last step valid since $|\lambda_C| < 1$ for all $\lambda_C$ corresponding to eigenvectors other than $\psi_0$ (being the domain of validity of our solution in \eq{bracketedCovarianceGeneralContinuousMain}).

Indeed, one can see a simpler route to this same result via \eq{bracketedCovarianceSymmetricContinuous}, however we have chosen to start from \eq{synchronizabilityContinuous} to demonstrate explicitly how the power series converges.

Similarly, starting from \eq{synchronizabilityDiscrete} for discrete time, with $\zeta=1$ and symmetric $C = C^T$ we have:
\begin{align}
	\left\langle \sigma^2 \right\rangle & = \frac{1}{N} \sum_{u=0}^{\infty}{ \mathrm{trace}\left( U C^{2u} U \right)}.
	\nonumber \\
	& = \frac{1}{N} \sum_{u=0}^{\infty} \mathrm{trace}\left( (C U)^{2u} \right), \nonumber
\end{align}
via \eq{CUequalsUCU} and \eq{CmUEqualsCUm}.
Then, following similar reasoning as for the continuous-time case, we have:
\begin{align}
	\left\langle \sigma^2 \right\rangle & = \frac{1}{N} \sum_{u=0}^{\infty} \sum_{\lambda \neq \lambda_0} { \lambda_C^{2u} }, \nonumber \\
	& =  \frac{1}{N} \sum_{\lambda \neq \lambda_0} \sum_{u=0}^{\infty} { (\lambda_C^{2})^{u} }, \nonumber \\
	& =  \frac{1}{N} \sum_{\lambda \neq \lambda_0} \frac{1}{ 1 - \lambda_C^2 },
	\label{eq:symmetricCdiscrete} 
\end{align}
with the last step valid since $|\lambda_C| < 1$ for all $\lambda_C$ corresponding to eigenvectors other than $\psi_0$ (being the domain of validity of our solution in \eq{bracketedCovarianceGeneralContinuousMain}).

\section{Supplementary figures}
\label{app:suppFigs}

\begin{figure}
		\includegraphics[width=0.95\columnwidth]{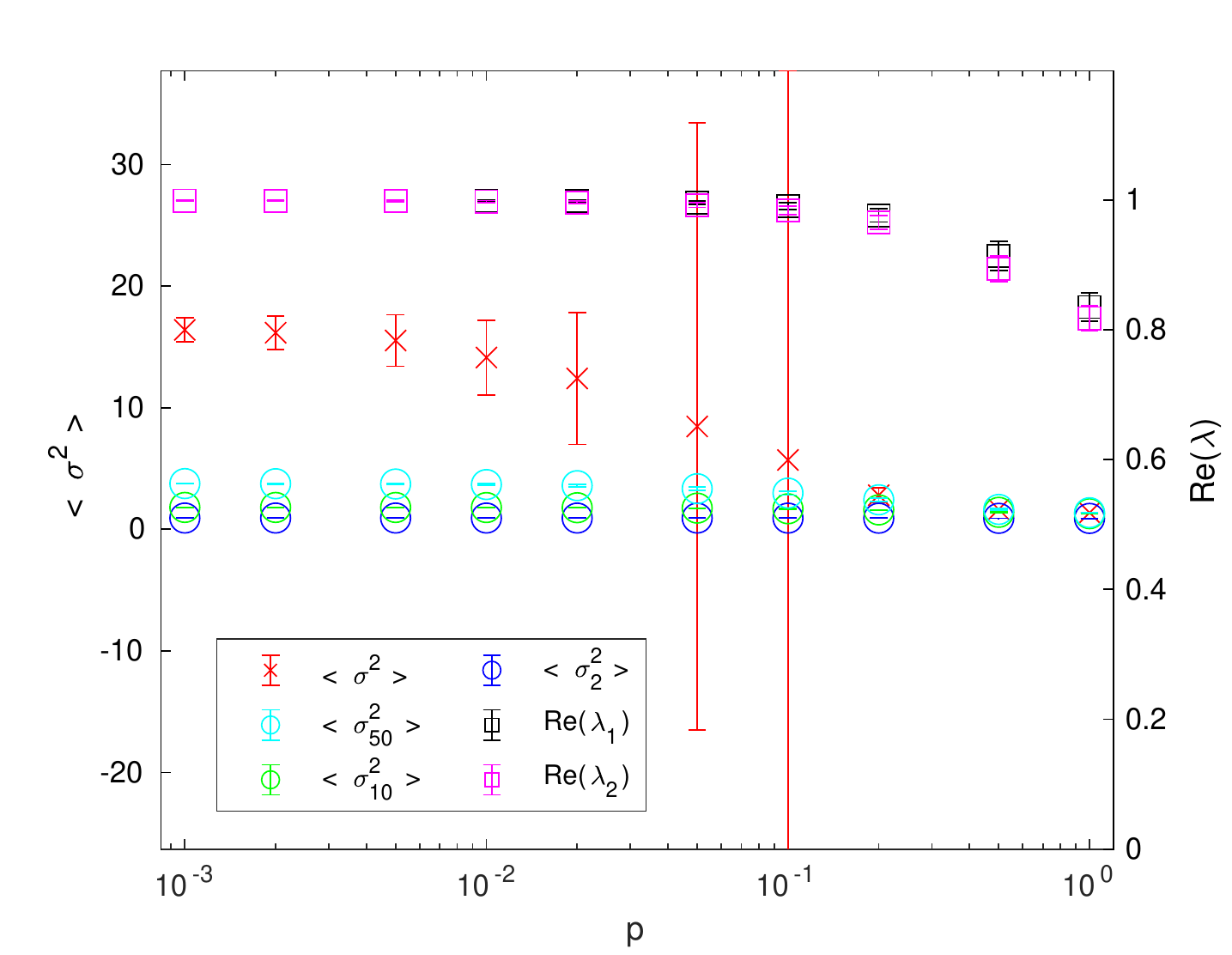}
	\caption{\label{fig:d2_c0_5_fullErrorBars} \textit{Numerical results for $d=2$, $c=0.5$} following \fig{d2_c0_5}, with the y-axis expanded to show the full extent of the error bars.
}
\end{figure}

\fig{d2_c0_5_fullErrorBars} reproduces \fig{d2_c0_5}, but with the y-axis expanded to show the full extent of the error bars.
The error bars are large in this regime because with $d=2$ a small amount of randomization $p$ has a relatively higher probability (as compared to larger $d$ and larger $p$) of making the network almost disconnected, and therefore driving $\left\langle \sigma^2 \right\rangle$ significantly higher for some network samples.

\end{document}